\theoremstyle{plain}
\newtheorem{theorem}{Theorem}
\newtheorem{lemma}{Lemma}
\newtheorem{corollary}{Corollary}
\theoremstyle{remark}
\newtheorem{remark}{Remark}
\theoremstyle{definition}
\numberwithin{equation}{section}
\DeclareMathOperator\Ai{{Ai}}
\DeclareMathOperator\Bi{{Bi}}
\newcommand\re{\mathrm{Re}}
\newcommand\im{\mathrm{Im}}
\newcommand{\B}{\mathbf{B}}
\begin{document}
\title{Connection problem of the first Painlev\'{e} transcendents with large initial data}

\author{Wen-Gao Long\thanks{School of Mathematics and Computational Science, Hunan University of Science and Technology, Xiangtan 411201, PR China}\quad
and Yu-Tian Li\thanks{School of Science and Engineering, Chinese University of Hong Kong, Shenzhen, Guangdong, 518172, PR China. Corresponding author.}
}
\date{}

\maketitle

\begin{abstract}

In previous work, Bender and Komijani (2015 \textit{J. Phys. A:
Math. Theor.} 48, 475202) studied the first Painlev\'e (PI) equation
and showed that
the sequence of initial conditions giving rise to separatrix solutions could be asymptotically determined using a $\mathcal{PT}$-symmetric Hamiltonian.
In the present work, we consider the initial value problem of the PI equation
in a more general setting.
We show that the initial conditions $(y(0),y'(0))=(a,b)$ located on a sequence of curves $\Gamma_n$, $n=1,2,\dots$, will give rise to separatrix solutions.
These curves separate the singular and the oscillating solutions of PI.
The limiting form equation $b^2/4 - a^3=f_n \sim A n^{6/5}$ for the curves $\Gamma_{n}$ as $n\to\infty$ is derived, where $A$ is a positive constant.
The discrete set $\{f_n\}$ could be regarded as the nonlinear eigenvalues.
Our analytical asymptotic formula of $\Gamma_n$ matches the numerical results remarkably well,
even for small $n$.
The main tool is the method of uniform asymptotics introduced by Bassom et al.
(1998 \textit{Arch. Rational Mech. Anal.} {143}, 241--271) in the studies of the second Painlev\'e equation.

\end{abstract}


\vspace{5mm}

\noindent {\it MSC 2010}: 33E17; 34M55; 41A60.

\noindent {\it Keywords}:  Painlev\'{e} equation, separatrix, eigenvalue, connection problem,
uniform asymptotics,  Airy function.
\section{Introduction}
Painlev\'e equations are a set of six nonlinear second-order ordinary differential equations (ODEs)
possessing the Painlev\'e property, namely,
the movable singularities of solutions must be poles and not branch points or essential singularities.
These ODEs were first studied by Painlev\'e and his colleagues for the classification of
nonlinear equations from a purely mathematical perspective, but
they have found important applications in many areas of mathematical physics,
including statistical physics~\cite{McCoy1992,JMMS1980,EFIK1996,Kan2002,TW2011},
Ising model~\cite{BMW1973,WMTB1976}, and integrable systems~\cite{GRP1991};
see also~\cite{Bender-Komijani-2015} and references therein.

Since there is no exact formula available,
asymptotic analysis is the common practice for studying the behaviors of solutions of Painlev\'e equations, which leads to a natural question -- how to link the behaviors of one solution between different  regimes, such as between asymptotic formulas for $x\to-\infty$ and those for $x\to\infty$. This is known as the connection problems, some particular cases have been studied for the Painlev\'e equations, and others are still awaiting further investigations.

This work focuses on the initial value problem of the first Painlev\'e equation
\begin{align}
&\frac{d^{2}y}{dt^{2}}=6y^2+t, \label{PI equation}
\\
&y(0)=a,\qquad y'(0)=b. \label{initial_condition}
\end{align}
It is known that all PI solutions are irreducible,
that is they cannot be represented by any elementary or classical special function.
The initial value problem was first sutdied by Holmes and Spence \cite{HS-1984},
and they showed that when $a=0$, there exist $\kappa_{1}<0<\kappa_{2}$ such that
the PI solution with $\kappa_{1}<b<\kappa_{2}$ oscillates around the parabola $y=-\sqrt{-t/6}$;
and when $b<\kappa_{1}$ or $b>\kappa_{2}$,
the PI solution possesses infinite number of poles on the negative real axis.
According to Kapaev \cite{AAKapaev-1988} (see also \cite{FAS-2006} and \cite{Long-Li-Liu-Zhao}),
we know that the asymptotic behavior of PI solutions are classified in three types as follows.

\begin{enumerate}
\item [(A)] \emph{oscillating solutions}: a two-parameter family of solutions oscillating about the parabola $y=-\sqrt{-t/6}$;
\item [(B)] \emph{separatrix solutions}: a one-parameter family of solutions satisfying
    \begin{equation}\label{eq-behavior-type-B}
      y(t)=\sqrt{\frac{-t}{6}}+\frac{h}{4\sqrt{\pi}}24^{-\frac{1}{8}}(-t)^{-\frac{1}{8}}\exp\left\{-\frac{4}{5}24^{\frac{1}{4}}(-t)^{\frac{5}{4}}\right\}+\mathcal{O}\left(|t|^{-\frac{5}{2}}\right)
    \end{equation}
    as $t\rightarrow-\infty$, where
    \begin{equation}\label{eq-parameter-h}
     h=s_{1}-s_{4};
    \end{equation}
\item [(C)] \emph{singular solutions}: a two-parameter family of solutions having infinitely many double poles on the negative real axis.
\end{enumerate}
The above three types solutions are characterized by the Stokes multipliers $s_{k}$, $k=0,1,2,3,4$,
\begin{equation}\label{eq-condition-stokes-multipliers}
\begin{aligned}
1+s_{2}s_{3}>0 \quad \text{for type (A)},\\
1+s_{2}s_{3}=0 \quad \text{for type (B)},\\
1+s_{2}s_{3}<0 \quad \text{for type (C)}.
\end{aligned}
\end{equation}
Here, the Stokes multipliers are defined in eq. (\ref{eq-Stokes-matrices}) below,
see also~\cite{Kapaev-Kitaev-1993}.
Type (B), the separatrix solutions, are also known as the \emph{tronqu\'{e}e solutions},
since these solutions are pole-free on two among five sectors in the complex $t$-plane.

A natural question is how to connect the above behaviors with the initial data $(a,b)$,
which is an open problem announced by Clarkson in several occasions \cite{CPA2003, CPA2006, CPA2019}.
Several studies addressed this problem, using asymptotic analysis and numerical simulations \cite{Bender-Komijani-2015,Long-Li-Liu-Zhao,JK-1992,Kitaev-1995,AAKapaev-2004,Fornberg-Weideman}.
In particular, Fornberg and Weidemann \cite{Fornberg-Weideman}
give a phase diagram \cite[Fig. 4.5]{Fornberg-Weideman} of the PI real solutions on the $(a,b)$-plane.
They find that there exists a sequence of curves on the $(a,b)$-plane
giving rise to the tronqu\'{e}e solutions, i.e., Type (B) solutions.
These curves divide the $(a,b)$-plane into separated regions,
and numerical simulations show that the PI solutions alternate between Type (A) and Type (C)
when the initial data $(a,b)$ vary continuously among these regions.
Hence, rigorous proof of this phenomenon
and an analytical study of these separating curves
are awaiting further study and deserve a thorough investigation.

In a previous work, Bender--Komijani~\cite{Bender-Komijani-2015} showed
that for a fixed initial condition, say $y(0)=0$,
there exists a sequence of initial slope $y'(0)=b_n$
giving rise to separatrix solutions;
and similarly, when $y'(0)=0$,
there exists a sequence of initial values $y(0)=a_n$
giving rise to separatrix solutions.
The sequences $\{a_{n}\}$ and $\{b_{n}\}$ are called \emph{nonlinear eigenvalues}.
Motivated by the results of Bender--Komijani~\cite{Bender-Komijani-2015},
Long et al.~\cite{Long-Li-Liu-Zhao} provided a rigorous proof of the
asymptotic formulas for the nonlinear eigenvalues $a_n$ and $b_n$
using the complex WKB method, also known as the method of uniform asymtotics.

What Bender--Komijani~\cite{Bender-Komijani-2015} and Long et al. \cite{Long-Li-Liu-Zhao} did is to study how the PI solutions evolve
when the initial data $(a,b)$ vary along a horizontal line or a vertical line in the phase plane.
What remains and perhaps is more interesting is to see how the solutions evolve
when both $y(0)$ and $y'(0)$ vary at the same time.
To give a characterization of the curves in \cite[Fig. 4.5]{Fornberg-Weideman},
we should consider this more general case instead of fixing one of the initial data.

Noting that the Hamiltonian of PI is $H(t):=\frac{y_{t}^2}{4}+\frac{ty}{2}-y^3$;
hence $H(0)=\frac{b^2}{4}-a^3$.
This quantity suggests us to regard $\frac{b^2}{4}-a^3$ as a whole term --
we only need to assume this term is large, and to derive an asymptotic classification of the PI solutions in terms of this quantity $\frac{b^2}{4}-a^3$.
In this sense, the problem under consideration turns to
be the connection problem of the Hamiltonian $H(t)$ of PI between $t=0$ and $t\to-\infty$
when $H(0)$ is large.

By the complex WKB method (also known as the method of \emph{uniform asymptotics}),
we derive the limiting form equations of the curves in \cite[Fig. 4.5]{Fornberg-Weideman}
as $\frac{b^2}{4}-a^3\to\pm\infty$,
under the assumption that $a^3\left|\frac{b^2}{4}-a^3\right|^{-1}$ is bounded.
In particular, we find that the initial data $(y(0),y'(0))=(a,b)$ located on the sequence of curves
\begin{equation}
\frac{b^2}{4}-a^3= f_n(a,b)\sim A\, n^{6/5},
\qquad A=\left[\frac{5\pi}{2\sqrt{3}\B\left(\frac{1}{2},\frac{1}{3}\right)}\right]^{6/5},
\end{equation}
give rise to the separatrix solution;
cf. Theorem~\ref{thm-1} below.
According to Bender--Komijani~\cite{Bender-Komijani-2015},
the separatrix solutions play the role of eigenfunctions,
and the corresponding initial data can be regarded as the eigenvalues for nonlinear equations.
Thus we may regard the discrete set $\{f_n(a,b)\}$ as \emph{nonlinear eigenvalues} for PI.
One may wonder how the solutions of PI evolve when $\left|\frac{b^2}{4}-a^3\right|\ll a^3$ as $a\to+\infty$.
A particular case of this situation deserves to be mentioned.
Assume $a\to+\infty$ with $\frac{b^2}{4}-a^3=0$.
From \cite[Fig. 4.5]{Fornberg-Weideman},
one might guess that the curve $\Gamma_{0}: \frac{b^2}{4}-a^3=0$ is a limiting form boundary of the fingerprint region when $a\to+\infty$,
that is all the curves in \cite[Fig. 4.5]{Fornberg-Weideman} locate to the left of $\Gamma_{0}$ when $a$ is large enough.
However, numerical evidence tells us that the three types of PI solutions
alternate when $a,b$ increase along $\Gamma_{0}: \frac{b^2}{4}-a^3=0$.
In other words, the curve $\Gamma_{0}$ passes through all the curves plotted in \cite[Fig.~4.5]{Fornberg-Weideman}.
We also find that when $a,b$ are large enough and varying on the curve $\tilde{\Gamma}_{0}: \frac{b^2}{4}-a^3+\frac{b}{4a}=0$,
the above-mentioned alternating phenomenon disappears
and all the PI solutions have infinite number of pole on the negative real axis,
i.e., all solutions belong to Type (C).
The above discussion suggests us to regard $\frac{b^2}{4}-a^3+\frac{b}{4a}$ as a whole term
and to assume it is large when $\left|\frac{b^2}{4}-a^3\right|\ll a^3$.
Basing on this assumption, we succeed to build the limiting-form equation of these curves
in the situation that $\left|\frac{b^2}{4}-a^3\right|\ll a^3$.

The main tool in our analysis is the method of uniform asymptotics
introduced by Bassom et al.~\cite{BCLM} in their studies of connection problem of PII.
The same method has been successively applied to PI in Long et al.~\cite{Long-Li-Liu-Zhao}.
Most part of the analysis of the present work follows from that in~\cite{Long-Li-Liu-Zhao}
with some necessary modifications, see the proof of Lemma~\ref{lem-case-I} below.

In summary, the present paper aims to study the connection problem of PI between $t=0$ and $t=-\infty$ when the initial data is large.
We intend to give an asymptotical classification of the PI solutions in terms of $a$ and $b$
when at least one of them is large.
The major step is to show the existence and to derive a limiting form equation of the curves in \cite[Fig. 4.5]{Fornberg-Weideman}.
The analysis will be divided into two cases:
(i) $a^{3}\leq M_{1}\left|\frac{b^2}{4}-a^3\right|$; and
(ii)$a^{3}\geq M_{2}\left|\frac{b^2}{4}-a^3\right|$,
where $M_{1}$ and $M_{2}$ are two constants with $0<M_{2}<M_{1}$.
In case (i), the result generalizes the main theorems in \cite{Long-Li-Liu-Zhao},
which is similar to the results for the PII equation given in \cite{Long-Zeng}.
In case (ii), we get some new properties of the PI solutions.
In particular, we find how the PI solutions evolve when $a$ increases to infinity
with $\frac{b^2}{4}-a^3=0$.

\section{Monodromy theory of PI}
The analysis in this paper starts from the following Lax pairs for the PI equation
\begin{equation}\label{lax pair-I}
\left\{\begin{aligned}
\frac{\partial\Psi}{\partial\lambda}
&=\left\{(4\lambda^4+t+2y^2)\sigma_{3}
-i(4y\lambda^2+t+2y^2)\sigma_{2}
-\Big(2y_{t}\lambda+\frac{1}{2\lambda}\Big)\sigma_{1}\right\}\Psi
\\
\frac{\partial\Psi}{\partial t}
&=\left\{\Big(\lambda+\frac{y}{\lambda}\Big)\sigma_{3}-\frac{iy}{\lambda}\sigma_{2}\right\}\Psi
\end{aligned}\right.
\end{equation}
where
$$\sigma_{1}=\begin{pmatrix}0&1\\1&0\end{pmatrix},\quad \sigma_{2}=\begin{pmatrix}0&-i\\i&0\end{pmatrix},\quad \sigma_{3}=\begin{pmatrix}1&0\\0&-1\end{pmatrix}$$
are the Pauli matrices and $y_{t}=dy/dt$. The
compatibility condition, $\Psi_{\lambda t}=\Psi_{t\lambda}$,
implies that $y=y(t)$ satisfies the first Painlev\'{e} equation;
see, for example, \cite{Kapaev-Kitaev-1993}.
Under the transformation
\begin{equation}\label{eq-transform-canonical solution}
\Phi(\lambda)
=\lambda^{\frac{1}{4}\sigma_{3}}
\frac{\sigma_{3}+\sigma_{1}}{\sqrt{2}}
\Psi\big(\sqrt{\lambda}\big),
\end{equation}
the first equation of (\ref{lax pair-I}) becomes
\begin{equation}\label{eq-fold-Lax-pair}
\frac{\partial\Phi}{\partial\lambda}
=\begin{pmatrix}y_{t}&2\lambda^{2}+2y\lambda-t+2y^2\\2(\lambda-y)&-y_{t}\end{pmatrix}\Phi.
\end{equation}
It is clear that the only singularity of eq. \eqref{eq-fold-Lax-pair} is
the irregular singular point $\lambda=\infty$,
and the canonical solutions $\Phi_{k}(\lambda)$, $k\in\mathbb{Z}$, satisfy
\begin{equation}\label{eq-canonical-solutions}
\Phi_{k}(\lambda,t)
=\lambda^{\frac{1}{4}\sigma_{3}}\frac{\sigma_{3}+\sigma_{1}}{\sqrt{2}}
\left(I+\frac{\mathcal{H}}{\sqrt{\lambda}}+\mathcal{O}\left(\frac{1}{\lambda}\right)\right)
e^{\big(\frac{4}{5}\lambda^{\frac{5}{2}}+t\lambda^{\frac{1}{2}}\big)\sigma_{3}},
\quad\lambda\in\Omega_{k},
\end{equation}
as $\lambda\to\infty$,
where $\mathcal{H}=-(\frac{1}{2}y_{t}^2-2y^3-ty)\sigma_{3}$,
and
$$\Omega_{k}=\left\{\lambda\in\mathbb{C}~:~\arg \lambda\in \left(-\frac{3\pi}{5}+\frac{2k\pi}{5},\frac{\pi}{5}+\frac{2k\pi}{5}\right)\right\}, \qquad k\in\mathbb{Z}.$$
These canonical solutions are connected by the Stokes matrices,
\begin{equation}\label{eq-Stokes-matrices}
\Phi_{k+1}=\Phi_{k}S_{k},
\end{equation}
where the Stokes matrices are triangular and given as
\begin{equation}
S_{2k-1}=\begin{pmatrix}1&s_{2k-1}\\0&1\end{pmatrix},\quad S_{2k}=\begin{pmatrix}1&0\\s_{2k}&1\end{pmatrix},
\end{equation}
and $s_{k}$ are the Stokes multipliers with the following constraints
\begin{equation}\label{eq-constraints-stokes-multipliers}
s_{k+5}=s_{k}\quad \text{and}\quad s_{k}=i(1+s_{k+2}s_{k+3}),\qquad k\in\mathbb{Z}.
\end{equation}
In addition, regarding $s_{k}$ as functions of $(t,y(t),y'(t))$, they also satisfy
\begin{equation}\label{eq-sk-s-k-relation}
s_{k}\left (t,y(t),y'(t)\right)=-\overline{s_{-k}\left (\bar{t},\overline{y(t)},\overline{y'(t)}\right )}, \qquad k\in\mathbb{Z},
\end{equation}
where $\bar\zeta$ stands for the complex conjugate of a complex number $\zeta$;
see \cite[eq. (13)]{AAKapaev-1988}.
From eq.~\eqref{eq-constraints-stokes-multipliers}, it is readily seen that, in general, two of the Stokes multipliers determine all others.
For the derivation  of eqs. (\ref{eq-canonical-solutions}), (\ref{eq-Stokes-matrices}) and (\ref{eq-constraints-stokes-multipliers}), and more details about the Lax pairs,
the reader is referred to \cite{FAS-2006}.

The main technique we use to calculate the Stokes multipliers is the complex WKB method,
also known as the method of \emph{uniform asymptotics} introduced by Bassom et al.~\cite{BCLM}.
The derivation consists of two steps. In the first step, we transform the first equation of the Lax pair \eqref{lax pair-I} into a second-order Shr\"{o}dinger equation. Denoting
$$\Phi(\lambda)=\left(\begin{matrix}\phi_{1}\\\phi_{2}\end{matrix}\right),$$
and defining $Y(\lambda;t)=(2(\lambda-y))^{-\frac{1}{2}}\phi_{2}$, we have
\begin{equation}\label{Schrodinger-equation-t-general}
\frac{d^{2}Y}{d\lambda^{2}}=\left[y_{t}^{2}+4\lambda^{3}+2\lambda t-2y t-4y^{3}-\frac{y_{t}}{\lambda-y}+\frac{3}{4}\frac{1}{(\lambda-y)^2}\right]Y.
\end{equation}
When $t=0$, equation (\ref{Schrodinger-equation-t-general}) is simplified as
\begin{equation}\label{Schrodinger-equation}
\frac{d^{2}Y}{d\lambda^{2}}
=\left[4\lambda^{3}+4\left(\frac{b^{2}}{4}-a^{3}\right)
-\frac{b}{\lambda-a}+\frac{3}{4}\frac{1}{(\lambda-a)^2}\right] Y.
\end{equation}
One may regard (\ref{Schrodinger-equation}) as either a scalar or a $1\times 2$ vector equation.
In the whole paper, we assume that at least one of $a$ and $b$ is large.

In case (i), $a^3\leq M_{1}\left|\frac{b^2}{4}-a^3\right|$,
since we assume at least one of $a$ and $b$ is large,
we conclude that $\frac{b^2}{4}-a^3$ is large positive or negative.
To apply the method of uniform asymptotics,
we introduce a large parameter
$\xi\to+\infty$ with
\begin{equation}\label{eq:xi}
\frac{b^2}{4}-a^3=\pm\xi^{\frac{6}{5}}
\end{equation}
and set
\begin{equation}\label{eq:transform1}
\lambda=\eta\xi^{\frac{2}{5}},\qquad
a=A(\xi)\xi^{\frac{2}{5}}, \quad
b=B(\xi)\xi^{\frac{3}{5}}.
\end{equation}
Then,
\begin{equation}\label{eq-two-cases-situation-I}
\begin{cases}
\frac{B(\xi)^{2}}{4}-A(\xi)^{3}&=1,\\
\frac{B(\xi)^{2}}{4}-A(\xi)^{3}&=-1.
\end{cases}
\end{equation}
and $A(\xi),\,B(\xi)=\mathcal{O}(1)$ as $\xi\to+\infty$. Moreover, equation \eqref{Schrodinger-equation} becomes
\begin{equation}\label{Schrodinger-equation-scaled}
\begin{aligned}
\frac{d^{2}Y}{d\eta^{2}}
=&\xi^2\left[4\eta^{3}+4\left(\frac{B(\xi)^{2}}{4}-A(\xi)^{3}\right)
-\frac{B(\xi)}{\xi(\eta-A(\xi))}+\frac{3}{4}\frac{1}{\xi^2(\eta-A(\xi))^2}\right]Y
\\
:=&\xi^2\left[4\eta^{3}+4\left(\frac{B(\xi)^{2}}{4}-A(\xi)^{3}\right)
-\varphi(\eta,\xi)\right]Y,
\end{aligned}
\end{equation}
where $\varphi(\eta,\xi)=\mathcal{O}(\xi^{-1})$ as $\xi\to+\infty$ uniformly
for all $\eta$ away from $\eta=A(\xi)$ and all bounded $A(\xi),B(\xi)$.

In case (ii), when $a^{3}\gg \left|\frac{b^2}{4}-a^3\right|$,
we have $b^2\sim 4a^3$;
thus $\frac{b}{a}\sim 2a^{\frac{1}{2}}\gg \xi^{\frac{1}{5}}$ as $\xi\to+\infty$,
and maybe dominate $\left|\frac{b^2}{4}-a^3\right|$.
Hence the large parameter used in case (i) is not appropriate anymore,
and instead, we introduce another large parameter $\mu\to+\infty$ with
\begin{equation}\label{eq:mu}
\frac{b^2}{4}-a^3+\frac{b}{4a}=\pm\mu^{\frac{6}{5}}
\end{equation}
and set
\begin{equation}\label{eq:transform2}
\lambda=\tilde\eta\mu^{\frac{2}{5}},\qquad
a=\tilde{A}(\mu)\mu^{\frac{2}{5}},\quad
b=\tilde{B}(\mu)\mu^{\frac{3}{5}}.
\end{equation}
Then
\begin{equation}\label{eq-two-cases-situation-II}
\begin{cases}
\frac{\tilde{B}(\mu)^{2}}{4}-\tilde{A}(\mu)^{3}+\frac{\tilde{B}(\mu)}{4\tilde{A}(\mu)}&=1,\\
\frac{\tilde{B}(\mu)^{2}}{4}-\tilde{A}(\mu)^{3}+\frac{\tilde{B}(\mu)}{4\tilde{A}(\mu)}&=-1,
\end{cases}
\end{equation}
and $|\tilde{A}(\mu)|^{-1},\,|\tilde{B}(\mu)|^{-1}=\mathcal{O}(1)$ as $\mu\to+\infty$.
In terms of the new variable $\tilde{\eta}$, equation \eqref{Schrodinger-equation} becomes
\begin{equation}\label{Schrodinger-equation-scaled}
\begin{aligned}
\frac{d^{2}Y}{d\tilde\eta^{2}}
=&\mu^2\left[4\tilde\eta^{3}+4\left(\frac{\tilde{B}(\mu)^{2}}{4}-\tilde{A}(\mu)^{3}+\frac{\tilde{B}(\mu)}{4\mu\tilde{A}(\mu)}\right)\right.
\\
&\hspace{7mm}-\left.\frac{\tilde{B}(\mu)\tilde\eta/\tilde{A}(\mu)}{\mu(\tilde\eta-\tilde{A}(\mu))}+\frac{3/4}{\mu^2(\tilde\eta-\tilde{A}(\mu))^2}\right]Y
\\
&=:\mu^2\left[4\tilde\eta^{3}+4\left(\frac{\tilde{B}(\mu)^{2}}{4}-\tilde{A}(\mu)^{3}
+\frac{\tilde{B}(\mu)}{4\mu\tilde{A}(\mu)}\right)-\tilde{\varphi}(\tilde\eta,\mu)\right] Y,
\end{aligned}
\end{equation}
where $\tilde{\varphi}(\tilde\eta,\mu)=\mathcal{O}(\mu^{-1})$ as $\mu\to+\infty$ uniformly for all $\tilde\eta$ away from $\tilde\eta=\tilde{A}(\mu)$ and all $|\tilde{A}(\mu)|^{-1},|\tilde{B}(\mu)|^{-1}=\mathcal{O}(1)$.

In the follow-up analysis, our task is to derive the uniform asymptotic behaviours of the solutions of this equation in the above two situations. In each situation, we obtain the asymptotic behaviour of the Stokes multipliers as $\xi\to+\infty$ or $\mu\to+\infty$. Substituting the corresponding results into \eqref{eq-behavior-type-B}, we get the asymptotic classification of the PI solutions in terms of $a$ and $b$.

The rest of the paper is organized as follows.
In section 2, we state our main results
including the leading asymptotic behavior of the Stokes multipliers
and the asymptotic classification of the PI solutions in terms of $a$ and $b$.
The proof of some lemmas are given in section 3,
where the method of uniform asymptotics is carried out.
A concluding remark and a discussion are given in the last section.

\section{Main results}
Recall that the three types of solutions of PI
are classified by the Stokes multipliers in eq.~\eqref{eq-condition-stokes-multipliers},
the main task for us is to derive the relationship
between the Stokes multipliers and the initial data $a$ and $b$.
It is difficult to find exact formulas for the Stokes multipliers in terms of $a$ and $b$,
however, motivated by the ideas in \cite{Long-Li-Liu-Zhao} and \cite{Sibuya-1967},
it seems possible to get the leading asymptotic behavior of the Stokes multipliers.

\subsection{Leading asymptotic behavior of the Stokes multipliers}

When the initial data satisfy $a^3\leq M_{1}\left|\frac{b^2}{4}-a^3\right|$,
we find that $A(\xi),B(\xi)=\mathcal{O}(1)$ as $\xi\to+\infty$.
According to \eqref{eq-two-cases-situation-I},
we shall divide the discussion into two cases,
and the corresponding results are stated in the following two Lemmas respectively.
\begin{lemma}\label{lem-case-I}
When $\frac{B(\xi)^{2}}{4}-A(\xi)^{3}=1$, {\it i.e.} $\frac{b^2}{4}-a^3=\xi^{\frac{6}{5}}\to+\infty$, we have
\begin{equation}\label{asym-Stokes-multipliers-case-I}
\begin{aligned}
s_{1}=&i\exp\left\{2\xi E_{0}-2\, E_{1}+o(1)\right\},\\
s_{-1}=-\overline{s_{1}}=&i\exp\left\{2\xi F_{0}-2\, F_{1}+o(1)\right\}
\end{aligned}
\end{equation}
uniformly for all $A(\xi),B(\xi)=\mathcal{O}(1)$, where
\begin{equation}\label{eq-constants-lemma-1}
\begin{aligned}
E_{0}&=\overline{F_{0}}=\frac{3}{5}\B\left(\frac{1}{2},\frac{1}{3}\right)-\frac{\sqrt{3}i}{5}\B\left(\frac{1}{2},\frac{1}{3}\right),\\
E_{1}&=\overline{F_{1}}=\frac{1}{4}\int_{e^{\frac{\pi i}{3}}}^{\infty e^{\frac{\pi i}{3}}}\frac{B(\xi)}{\left(s^3+1\right)^{\frac{1}{2}}(s-A(\xi))}ds.
\end{aligned}
\end{equation}
\end{lemma}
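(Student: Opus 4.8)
The plan is to apply the method of uniform asymptotics of Bassom et al.~\cite{APC} to the scalar equation obtained from \eqref{Schrodinger-equation} by the substitution \eqref{eq:transform1} with $\frac{B(\xi)^{2}}{4}-A(\xi)^{3}=1$, which has the form $d^{2}Y/d\eta^{2}=\xi^{2}\hat{Q}(\eta,\xi)Y$ with
\[
\hat{Q}(\eta,\xi)=4\eta^{3}+4-\frac{B(\xi)}{\xi(\eta-A(\xi))}+\frac{3}{4\xi^{2}(\eta-A(\xi))^{2}}=:Q_{0}(\eta)+\frac{Q_{1}(\eta)}{\xi}+\frac{Q_{2}(\eta)}{\xi^{2}},
\]
so that $Q_{0}(\eta)=4\eta^{3}+4$, $Q_{1}(\eta)=-B(\xi)/(\eta-A(\xi))$ and $Q_{2}(\eta)=\tfrac{3}{4}(\eta-A(\xi))^{-2}$. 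The leading symbol $Q_{0}$ has three simple turning points, $\eta_{*}^{3}=-1$; in the variable $\eta$ the five sectors $\Omega_{k}$ of \eqref{eq-canonical-solutions} keep their angular openings, and the Stokes graph of $Q_{0}$ joins $\{-1,e^{i\pi/3},e^{-i\pi/3}\}$ to the $\Omega_{k}$. A preliminary step is to draw this Stokes graph and to isolate the turning point(s) whose Stokes curves separate the recessive domains attached to $\Omega_{1}$ and $\Omega_{2}$ (these govern $s_{1}$) and, likewise, those attached to $\Omega_{-1}$ and $\Omega_{0}$ (these govern $s_{-1}$).

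The core of the argument is the usual two-scale matching. Near each relevant simple turning point $\eta_{*}$ one introduces the conformal variable $\zeta=\zeta(\eta,\xi)$ fixed on the pertinent Stokes sector by $\tfrac{2}{3}\zeta^{3/2}=\int_{\eta_{*}}^{\eta}\hat{Q}(s,\xi)^{1/2}\,ds$, rewrites the equation as a Volterra perturbation of the Airy equation, and obtains a fundamental pair of the form $(\zeta/\hat{Q})^{1/4}\Ai(\xi^{2/3}\zeta)(1+o(1))$ together with the same expression with $\Bi$, the error estimates being uniform for all bounded $A(\xi),B(\xi)$; away from the turning points and from the pole $\eta=A(\xi)$ one uses the Liouville--Green pair $\hat{Q}^{-1/4}\exp\{\pm\xi\int\hat{Q}^{1/2}\,ds\}(1+o(1))$. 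Along a contour running from $\infty$ in the recessive direction of $\Omega_{1}$ through the turning-point neighbourhood to $\infty$ in $\Omega_{2}$, one matches the outermost WKB solution to the scalar solution $Y=(2(\lambda-y))^{-1/2}\phi_{2}$ attached to $\Phi_{1}$ through the normalization in \eqref{eq-canonical-solutions}, continues it across the Airy region, and re-expands it in the recessive/dominant basis adapted to $\Phi_{2}$; reading the resulting relation through $\Phi_{2}=\Phi_{1}S_{1}$ of \eqref{eq-Stokes-matrices} then yields $s_{1}=i\exp\{2\xi E_{0}-2E_{1}+o(1)\}$, with the prefactor $i$ the Airy Stokes constant, $\xi E_{0}$ the leading WKB action along the contour, and $-E_{1}$ its first correction.

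It then remains to evaluate $E_{0}$ and $E_{1}$. The leading action is a contour integral of $Q_{0}^{1/2}=2(\eta^{3}+1)^{1/2}$ between the governing turning points; with the branch fixed by the matching and the path taken as $-1\to 0\to e^{-i\pi/3}$, the substitutions $\eta=-t$ on $[-1,0]$ and $\eta=t\,e^{-i\pi/3}$ on $[0,e^{-i\pi/3}]$, followed by $t^{3}=u$, reduce each piece to $\tfrac{2}{3}\B(\tfrac{1}{3},\tfrac{3}{2})=\tfrac{2}{5}\B(\tfrac{1}{2},\tfrac{1}{3})$ times $1$, resp.\ $e^{-i\pi/3}$; since $1+e^{-i\pi/3}=\sqrt{3}\,e^{-i\pi/6}$ the sum is $E_{0}=\frac{2\sqrt{3}}{5}\B(\tfrac{1}{2},\tfrac{1}{3})e^{-i\pi/6}=\frac{3}{5}\B(\tfrac{1}{2},\tfrac{1}{3})-\frac{\sqrt{3}\,i}{5}\B(\tfrac{1}{2},\tfrac{1}{3})$, exactly \eqref{eq-constants-lemma-1}. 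The correction is the $\xi^{-1}$-part of $\xi\int\hat{Q}^{1/2}\,ds$, namely $\tfrac{1}{2}\int Q_{1}/Q_{0}^{1/2}\,ds=-\tfrac{1}{4}\int B\,ds/\bigl((s-A)(s^{3}+1)^{1/2}\bigr)$ along the Stokes ray from $e^{i\pi/3}$ to $\infty e^{i\pi/3}$ (the integrand is $O(s^{-5/2})$ at infinity, hence the integral converges), that is, precisely $-E_{1}$. Finally, since $a,b$ are real and $t=0$, the symmetry \eqref{eq-sk-s-k-relation} reduces to $s_{-1}(0,a,b)=-\overline{s_{1}(0,a,b)}$; conjugating the first line of \eqref{asym-Stokes-multipliers-case-I} and using $\overline{E_{0}}=F_{0}$, $\overline{E_{1}}=F_{1}$ gives the second line at once.

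The principal obstacle will be the uniformity with respect to the bounded parameters $A(\xi),B(\xi)$. Since $\tfrac{B^{2}}{4}-A^{3}=1$ forces $B^{2}=4(A^{3}+1)$, the number $A(\xi)$ is real and confined to a bounded subinterval of $[-1,\infty)$, while $\hat{Q}$ has a double pole at $\eta=A(\xi)$; one must verify that the contour and the turning-point neighbourhoods above can be chosen to stay a fixed positive distance away from $\eta=A(\xi)$, uniformly in $A,B$. The only genuinely delicate case is the coalescence $A(\xi)\to-1$, i.e.\ $|b|\ll\xi^{3/5}$, in which the pole merges with the turning point $\eta=-1$ and the local model is no longer plain Airy but acquires a regular singular point at $\eta=-1$; there one uses that the apparent singularity of $Y$ is removable for the genuine Lax solution $\phi_{2}=(2(\lambda-a))^{1/2}Y$ (its indicial exponents at $\eta=-1$ being $\tfrac{3}{2}$ and $-\tfrac{1}{2}$), so the turning point stays effectively simple for the connection problem, exactly as in the treatment of the nonlinear eigenvalues $a_{n}$ in \cite{Long-Li-Liu-Zhao}. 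Beyond that, the remaining work is bookkeeping: fixing the branch of $\hat{Q}^{1/2}$ and propagating the unimodular phase factors consistently through the chain of matchings so that the prefactor $i$ and the exact constants $E_{0},E_{1}$ in \eqref{eq-constants-lemma-1} come out as stated.
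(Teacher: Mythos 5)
Your overall strategy is the paper's: reduce to the scalar equation \eqref{Schrodinger-equation-scaled-I}, approximate its solutions uniformly by Airy functions near the turning point governing the $\Omega_1\to\Omega_2$ connection, match against the canonical solutions \eqref{eq-canonical-solutions} on the two adjacent Stokes rays, read $s_1$ off $\Phi_2=\Phi_1S_1$, and get $s_{-1}$ from the reality symmetry \eqref{eq-sk-s-k-relation}. The formulas you arrive at are the correct ones.

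There is, however, an internal inconsistency in how you identify and evaluate the constants, and as a prescription it would fail if followed literally. The turning point governing $s_1$ is the one near $e^{i\pi/3}$ (its two relevant Stokes curves are asymptotic to the directions $\arg\eta=\pi/5$ and $\arg\eta=3\pi/5$, the boundary rays between $\Omega_1$ and $\Omega_2$), and \emph{both} $E_0$ and $E_1$ arise as the $O(1)$ and $O(\xi^{-1})$ parts of one and the same quantity, the regularized action $\int_{\eta_1}^{\eta}\hat Q^{1/2}\,ds-\tfrac45\eta^{5/2}$ taken from that turning point out to infinity; this is the content of Lemma~\ref{lemma-zeta-omega-eta-infty-relation}. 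You instead evaluate $E_0$ as the action between the turning points $-1$ and $e^{-i\pi/3}$. The number comes out right, but only because the symmetry of $\eta^3+1$ makes $\int_{-1}^{e^{-i\pi/3}}2(s^3+1)^{1/2}ds=\tfrac25\B(\tfrac12,\tfrac13)\bigl(1+e^{-i\pi/3}\bigr)$ coincide with $\lim_{\eta\to+\infty}\bigl[\int_{e^{i\pi/3}}^{\eta}2(s^3+1)^{1/2}ds-\tfrac45\eta^{5/2}\bigr]=\tfrac25\B(\tfrac12,\tfrac13)\bigl(2-e^{i\pi/3}\bigr)$. Had you computed the $O(\xi^{-1})$ correction along that same inter-turning-point contour you would have obtained $-\tfrac14\int_{-1}^{e^{-i\pi/3}}B\,ds/\bigl((s^3+1)^{1/2}(s-A)\bigr)$, which is not $-E_1$ and which moreover runs into the pole $s=A\in[-1,\mathcal{O}(1)]$; you must use the contour from $e^{i\pi/3}$ to infinity for both terms, as you in fact do for $E_1$. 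Relatedly, your ``delicate case'' is misplaced for this lemma: since $A(\xi)$ is real with $A(\xi)\ge-1$, the pole never approaches $e^{\pm i\pi/3}$, so no pole--turning-point coalescence occurs in the computation of $s_1$; the genuine role of the pole here is that it may lie to the right of the Stokes line joining the upper and lower turning points, which obstructs a uniform treatment of the whole sector and hence a direct computation of $s_0$ — that is precisely why one computes $s_1$ and $s_{-1}$ instead. The coalescence with a turning point on the real axis matters only in the setting of Lemma~\ref{lem-case-II}.
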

\begin{lemma}\label{lem-case-II}
When $\frac{B(\xi)^{2}}{4}-A(\xi)^{3}=-1$, {\it i.e.} $\frac{b^2}{4}-a^3=-\xi^{\frac{6}{5}}\to-\infty$, we have
\begin{equation}\label{asym-Stokes-multipliers-case-II}
\begin{aligned}
s_{2}=&i \exp\left\{-2\xi G_{0}+2G_{1}+o(1)\right\},\\
s_{3}=s_{-2}=&i \exp\left\{-2\xi H_{0}+2H_{1}+o(1)\right\}
\end{aligned}
\end{equation}
uniformly for all $A(\xi),B(\xi)=\mathcal{O}(1)$,
where
\begin{equation}
\begin{aligned}
G_{0}=\overline{H_{0}}
=&\frac{2}{5}e^{\frac{2\pi i}{3}}\B\left(\frac{1}{2},\frac{1}{3}\right),
\\
G_{1}=\overline{H_{1}}
=&\frac{1}{4}\int_{e^{\frac{2\pi i}{3}}}^{\infty e^{\frac{2\pi i}{3}}}
\frac{B(\xi)}{\left(s^3-1\right)^{\frac{1}{2}}(s-A(\xi))}ds.
\end{aligned}
\end{equation}
\end{lemma}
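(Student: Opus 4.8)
The plan is to prove Lemma~\ref{lem-case-II} by running the same uniform-asymptotics argument as for Lemma~\ref{lem-case-I}, now applied to the scaled Schr\"odinger equation obtained from \eqref{Schrodinger-equation} via \eqref{eq:xi}--\eqref{eq:transform1} with $\frac{B(\xi)^{2}}{4}-A(\xi)^{3}=-1$, namely
\[
\frac{d^{2}Y}{d\eta^{2}}=\xi^{2}\bigl[\,4\eta^{3}-4-\varphi(\eta,\xi)\,\bigr]Y,\qquad \varphi(\eta,\xi)=\mathcal{O}(\xi^{-1}),
\]
whose leading potential $Q_{0}(\eta)=4(\eta^{3}-1)$ has the three simple turning points $\eta=1,\omega,\omega^{2}$ with $\omega=e^{2\pi i/3}$. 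Since $\Phi_{3}=\Phi_{2}S_{2}$ in \eqref{eq-Stokes-matrices} with $S_{2}$ lower triangular, the multiplier $s_{2}$ governs the transition between $\Phi_{2}$ and $\Phi_{3}$, and the configuration of the sectors $\Omega_{k}$ relative to the turning points shows that $\eta_{0}=\omega$ is the one lying in the overlap $\Omega_{2}\cap\Omega_{3}$; this is the turning point controlling $s_{2}$. By periodicity in \eqref{eq-constraints-stokes-multipliers} one has $s_{3}=s_{-2}$, and \eqref{eq-sk-s-k-relation} at $t=0$ with real $a,b$ gives $s_{-2}=-\overline{s_{2}}$; hence it suffices to compute $s_{2}$, after which $s_{3}=s_{-2}$ and $H_{0}=\overline{G_{0}}$, $H_{1}=\overline{G_{1}}$ follow by complex conjugation.

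First I would introduce the Langer variable $\zeta=\zeta(\eta,\xi)$ near $\eta_{0}=\omega$ by $\tfrac{2}{3}\zeta^{3/2}=\int_{\omega}^{\eta}\bigl(4s^{3}-4-\varphi(s,\xi)\bigr)^{1/2}\,ds$, which turns the equation into a perturbed Airy equation, and invoke the uniform approximation theorem of Bassom et al.~\cite{APC}, as in \cite{Long-Li-Liu-Zhao}: the solution recessive as $\eta\to\infty$ along $\arg\eta=2\pi/3$ is, uniformly in a full neighbourhood of $\omega$ and uniformly for all $A(\xi),B(\xi)=\mathcal{O}(1)$, a constant multiple of $(\mathrm{d}\zeta/\mathrm{d}\eta)^{-1/2}\Ai\!\bigl(\xi^{2/3}\zeta\bigr)(1+o(1))$, with the analogous representations in terms of $\Ai\!\bigl(\omega^{\pm1}\xi^{2/3}\zeta\bigr)$ for the solutions recessive in the two adjacent Stokes sectors. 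Matching these representations on the overlaps inside $\Omega_{2}\cap\Omega_{3}$ by the Airy identity $\Ai(z)+\omega\Ai(\omega z)+\omega^{2}\Ai(\omega^{2}z)=0$ then expresses $s_{2}$ as $i$ times the exponential of twice the WKB action accumulated between $\omega$ and the neighbouring turning point, the constant $i$ arising from the Airy connection coefficients.

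Next I would evaluate the action. Writing $\bigl(4s^{3}-4-\varphi(s,\xi)\bigr)^{1/2}=2(s^{3}-1)^{1/2}-\dfrac{\varphi(s,\xi)}{4(s^{3}-1)^{1/2}}+\mathcal{O}(\xi^{-2})$ with $\varphi(s,\xi)=\dfrac{B(\xi)}{\xi(s-A(\xi))}+\mathcal{O}(\xi^{-2})$, the leading part of the action is $\xi$ times a contour integral of $2(s^{3}-1)^{1/2}$ anchored at $\omega$; the substitution $s^{3}=z$ turns it into a Beta integral, and keeping track of the branch of $(s^{3}-1)^{1/2}$ it evaluates to $\xi G_{0}$ with $G_{0}=\frac{2}{5}e^{2\pi i/3}\B(\tfrac12,\tfrac13)$. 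The first-order term contributes $-G_{1}$ with $G_{1}=\frac14\int_{\omega}^{\infty e^{2\pi i/3}}\dfrac{B(\xi)}{(s^{3}-1)^{1/2}(s-A(\xi))}\,ds$; the constraint $\frac{B^{2}}{4}-A^{3}=-1$ in \eqref{eq-two-cases-situation-I} forces $A(\xi)$ real with $A(\xi)\ge1$, so the pole $s=A(\xi)$ stays at distance $\ge\sqrt3$ from this ray and $G_{1}=\mathcal{O}(1)$ uniformly, while the $\mathcal{O}(\xi^{-2})$ terms integrate to $o(1)$. Collecting a factor $2$ from the Airy matching gives $s_{2}=i\exp\{-2\xi G_{0}+2G_{1}+o(1)\}$, which is \eqref{asym-Stokes-multipliers-case-II}.

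I expect the main obstacle to be the uniform control over the parameter range: one must show that the Airy approximation of \cite{APC}, the bound $\varphi=\mathcal{O}(\xi^{-1})$, and the remainders in the expansion of the square root all remain valid uniformly as $A(\xi),B(\xi)$ sweep the curve $\frac{B^{2}}{4}-A^{3}=-1$, which in particular lets $A(\xi)$ reach the value $1$, where the pole $\eta=A(\xi)$ of $\varphi$ coalesces with the real turning point $\eta=1$. Although this coalescence is away from $\omega$ and never enters the $s_{2}$ computation, one still has to check that the simple-turning-point hypothesis holds with constants uniform in $A(\xi)$ on all the WKB arcs joining $\omega$ to the directions characterizing $\Omega_{2}$ and $\Omega_{3}$, and that the accumulated error genuinely remains $o(1)$ rather than degrading to $\mathcal{O}(1)$ as $A(\xi)\to1$; securing these uniform estimates — which the body of the paper presumably establishes once and for both lemmas — is the delicate point.
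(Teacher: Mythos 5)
Your proposal is correct and follows essentially the same route as the paper: it isolates the turning point near $e^{2\pi i/3}$ (precisely so that the coalescence of the pole $\eta=A(\xi)$ with the turning point near $\eta=1$ never enters the computation of $s_{2}$), applies the uniform Airy approximation of Bassom et al., evaluates the leading action as the Beta-integral $G_{0}$ and the first correction as $G_{1}$, and obtains $s_{3}=s_{-2}=-\overline{s_{2}}$ by conjugation. The only cosmetic difference is that you perform the Airy matching via the identity $\Ai(z)+\omega\Ai(\omega z)+\omega^{2}\Ai(\omega^{2}z)=0$, whereas the paper uses the sector-dependent asymptotic expansions of the $\Ai$/$\Bi$ pair together with $\Phi_{3}=\Phi_{2}S_{2}$; these are equivalent bookkeepings of the same connection formulas.
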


When the initial data satisfy $a^{3}\geq M_{2}\left|\frac{b^2}{4}-a^3\right|$,
we also divide the discussion into two cases according to \eqref{eq-two-cases-situation-II},
and state the corresponding leading asymptotic behavior of the Stokes multipliers in the following Lemmas.

\begin{lemma}\label{lem-case-I-situation-II}
When $\frac{\tilde{B}(\mu)^{2}}{4}-\tilde{A}(\mu)^{3}+\frac{\tilde{B}(\mu)}{4\mu\tilde{A}(\mu)}=1$, {\it i.e.} $\frac{b^2}{4}-a^3+\frac{b}{4a}=\mu^{\frac{6}{5}}\to+\infty$, we have
\begin{equation}\label{asym-Stokes-multipliers-case-I-situation-II}
\begin{aligned}
s_{1}=&i\exp\left\{2\mu E_{0}-2\, \tilde{E}_{1}+o(1)\right\},\\
s_{-1}=-\overline{s_{1}}=&i\exp\left\{2\mu F_{0}-2\, \tilde{F}_{1}+o(1)\right\}
\end{aligned}
\end{equation}
uniformly for all $|\tilde{A}(\mu)|^{-1},|\tilde{B}(\mu)|^{-1}=\mathcal{O}(1)$, where $E_{0}, F_{0}$ are given in \eqref{eq-constants-lemma-1} and
\begin{equation}\label{eq-constants-lemma-3}
\tilde{E}_{1}
=\overline{\tilde{F}_{1}}
=\frac{1}{4}\int_{e^{\frac{\pi i}{3}}}^{\infty e^{\frac{\pi i}{3}}}
\frac{s\tilde{B}(\mu)}{\left(s^3+1\right)^{\frac{1}{2}}\tilde{A}(\mu)(s-\tilde{A}(\mu))}ds.
\end{equation}
\end{lemma}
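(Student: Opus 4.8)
The plan is to treat Lemma~\ref{lem-case-I-situation-II} as a perturbation of Lemma~\ref{lem-case-I}, since the scaled Schr\"odinger equation in case~(ii) has exactly the same leading-order potential $4\tilde\eta^3 + 4\bigl(\tfrac{\tilde B^2}{4}-\tilde A^3+\cdots\bigr)$ as in case~(i), and in the branch under consideration that constant equals $+1$. Thus the two turning points relevant to computing $s_1$ and $s_{-1}$ are the same as in Lemma~\ref{lem-case-I}, namely the roots of $\eta^3+1=0$; in particular the dominant exponential rates are governed by $\int \sqrt{s^3+1}\,ds$ between consecutive turning points, which produces exactly the constants $E_0$ and $F_0$ from \eqref{eq-constants-lemma-1}. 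The only difference is in the subleading $\mathcal{O}(\mu^{-1})$ correction term $\tilde\varphi(\tilde\eta,\mu)$, whose $\mu^{-1}$-coefficient is now $\tilde B(\mu)\tilde\eta/\bigl(\tilde A(\mu)(\tilde\eta-\tilde A(\mu))\bigr)$ rather than $B(\xi)/(\eta-A(\xi))$, and this is precisely what replaces $E_1, F_1$ by $\tilde E_1, \tilde F_1$ in \eqref{eq-constants-lemma-3}.

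Concretely, I would first set up the complex WKB / uniform-asymptotics machinery exactly as in the proof of Lemma~\ref{lem-case-I} (which the excerpt tells me to take as given): identify the Stokes graph of the leading potential $p(\tilde\eta) = 4\tilde\eta^3 + 4$, locate the three turning points $\tilde\eta = e^{i\pi/3}, e^{i\pi}, e^{-i\pi/3}$, and write the canonical solutions $\Phi_k$ in terms of Airy-type comparison functions along the relevant Stokes lines. The Stokes multiplier $s_1$ is read off from the connection coefficient across a pair of adjacent sectors; to leading exponential order this is $\exp\{2\mu\int_{\gamma}\sqrt{s^3+1}\,ds\}$ along the appropriate path $\gamma$, giving the factor $e^{2\mu E_0}$. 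The next step is to track the first correction: writing the full potential as $\mu^2[p(\tilde\eta) + \mu^{-1}q(\tilde\eta) + \mathcal{O}(\mu^{-2})]$ with $q(\tilde\eta) = \tilde B\tilde\eta/\bigl(\tilde A(\tilde\eta-\tilde A)\bigr)$ — here I must be careful that the constant inside the bracket contains a hidden $\tfrac{\tilde B}{4\mu\tilde A}$ piece, which contributes at the same $\mu^{-1}$ order and must be folded into $q$ — a standard WKB expansion shows the correction to $\log s_1$ is $-\tfrac12\int_{\gamma} q(s)/\sqrt{s^3+1}\,\,ds$, i.e. $-2\tilde E_1$ after matching the normalization $\tfrac14$ in \eqref{eq-constants-lemma-3}. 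The contour $\int_{e^{i\pi/3}}^{\infty e^{i\pi/3}}$ in \eqref{eq-constants-lemma-3} is the deformation of $\gamma$ to a half-line emanating from the turning point $e^{i\pi/3}$ into the valid Stokes sector, which is permissible because $q(s)/\sqrt{s^3+1}$ is integrable at the turning point (the $\tilde\eta$ in the numerator does not vanish there) and decays like $s^{-5/2}$ at infinity. The identity $s_{-1} = -\overline{s_1}$ then follows immediately from the symmetry relation \eqref{eq-sk-s-k-relation} with $k=1$, $t=0$, $a,b$ real, which also forces $F_0 = \overline{E_0}$ and $\tilde F_1 = \overline{\tilde E_1}$.

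The main obstacle I anticipate is establishing uniformity of the $o(1)$ error term over the full range $|\tilde A(\mu)|^{-1}, |\tilde B(\mu)|^{-1} = \mathcal{O}(1)$ — in particular, $\tilde A(\mu)$ and $\tilde B(\mu)$ may themselves grow with $\mu$ (this is exactly the regime $\bigl|\tfrac{b^2}{4}-a^3\bigr| \ll a^3$ that motivated introducing $\mu$ in the first place), so the pole of $\tilde\varphi$ at $\tilde\eta = \tilde A(\mu)$ may drift toward or away from the turning points and toward infinity. One must check that the correction integrand $q(s)/\sqrt{s^3+1}$ stays uniformly integrable and that the pole $s=\tilde A(\mu)$ never collides with the integration contour or the turning point $e^{i\pi/3}$; since the branch condition $\tfrac{\tilde B^2}{4}-\tilde A^3 + \tfrac{\tilde B}{4\mu\tilde A}=1$ pins down $\tilde A, \tilde B$ to a curve on which $\tilde A^3 \approx \tfrac{\tilde B^2}{4}$, one has $\tilde A$ real (or in a controlled sector) and bounded away from the turning points, so this can be made rigorous, but the bookkeeping is delicate. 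A secondary technical point is verifying that the $\mu^{-1}$ term genuinely dominates the $\mu^{-2}$ remainder uniformly even when $\tilde B/\tilde A$ is large; this requires the extra $\tfrac{1}{4}(\tilde\eta-\tilde A)^{-2}$ term in the potential to remain $\mathcal{O}(\mu^{-2})$ uniformly, which holds as long as $\tilde\eta$ stays a fixed distance from $\tilde A(\mu)$ along the chosen contours. Once these uniformity estimates are in place, the rest of the argument is a direct transcription of the Lemma~\ref{lem-case-I} proof with $\xi \mapsto \mu$, $E_1 \mapsto \tilde E_1$, $F_1 \mapsto \tilde F_1$.
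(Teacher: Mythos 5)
Your proposal matches the paper's treatment: the paper proves Lemma~\ref{lem-case-I} in detail and states that Lemma~\ref{lem-case-I-situation-II} follows "in a similar manner except for a few minor modifications," and your argument correctly identifies those modifications — the leading potential $4(\tilde\eta^3+1)$ and hence $E_0,F_0$ are unchanged, while the hidden $\tfrac{\tilde B}{4\mu\tilde A}$ in the constant combines with the pole term to give the new $\mathcal{O}(\mu^{-1})$ correction $\tilde B\tilde\eta/(\mu\tilde A(\tilde\eta-\tilde A))$, which is exactly what replaces $E_1,F_1$ by $\tilde E_1,\tilde F_1$. One small slip: the new integrand decays like $s^{-3/2}$ (not $s^{-5/2}$) along the ray $\arg s=\pi/3$, which is still integrable, so the conclusion is unaffected.
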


\begin{lemma}\label{lem-case-II-situation-II}
When $\frac{\tilde{B}(\mu)^{2}}{4}-\tilde{A}(\mu)^{3}+\frac{\tilde{B}(\mu)}{4\mu\tilde{A}(\mu)}=-1$, {\it i.e.} $\frac{b^2}{4}-a^3+\frac{b}{4a}=-\mu^{\frac{6}{5}}\to-\infty$, we have
\begin{equation}\label{asym-Stokes-multipliers-case-II-situation-II}
\begin{aligned}
s_{2}=&i \exp\left\{-2\mu G_{0}+2\tilde{G}_{1}+o(1)\right\},\\
s_{3}=-\overline{s_{2}}=&i \exp\left\{-2\mu H_{0}+2\tilde{H}_{1}+o(1)\right\}
\end{aligned}
\end{equation}
uniformly for all $|\tilde{A}(\mu)|^{-1},|\tilde{B}(\mu)|^{-1}=\mathcal{O}(1)$,
where $G_{0}, H_{0}$ are given in Lemma~\ref{lem-case-II} and
\begin{equation}
\tilde{G}_{1}=\overline{\tilde{H}_{1}}
=\frac{1}{4}\int_{e^{\frac{2\pi i}{3}}}^{\infty e^{\frac{2\pi i}{3}}}
\frac{\tilde{B}(\mu)s}{\left(s^3-1\right)^{\frac{1}{2}}\tilde{A}(\mu)(s-\tilde{A}(\mu))}ds.
\end{equation}
\end{lemma}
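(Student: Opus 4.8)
The plan is to prove Lemma~\ref{lem-case-II-situation-II} in complete parallel to Lemmas~\ref{lem-case-I}--\ref{lem-case-I-situation-II}, since it is the ``$-1$'' counterpart of Lemma~\ref{lem-case-I-situation-II} and involves the turning points relevant to the Stokes multipliers $s_2$ and $s_3$ rather than $s_1$ and $s_{-1}$. First I would work with the scaled Schr\"{o}dinger equation in the variable $\tilde\eta$ (the second display labelled \eqref{Schrodinger-equation-scaled}), which after imposing $\frac{\tilde B(\mu)^2}{4}-\tilde A(\mu)^3+\frac{\tilde B(\mu)}{4\mu\tilde A(\mu)}=-1$ takes the form $Y'' = \mu^2\bigl[4\tilde\eta^3 - 4 + \mathcal{O}(\mu^{-1})\bigr]Y$ uniformly away from $\tilde\eta=\tilde A(\mu)$. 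The leading-order potential $Q(\tilde\eta)=4\tilde\eta^3-4$ has three simple turning points at the cube roots of $1$, namely $\tilde\eta = 1, e^{2\pi i/3}, e^{-2\pi i/3}$; the pair relevant to $s_2$ and $s_{-2}=s_3$ is the complex-conjugate pair $e^{\pm 2\pi i/3}$, which is exactly why the integration contours in $\tilde G_1,\tilde H_1$ emanate from $e^{2\pi i/3}$ (resp.\ its conjugate).

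The key steps, in order, are: (1) set up the standard complex-WKB/uniform-asymptotics framework of Bassom et al.~\cite{APC} near the turning point $\tilde\eta_0 = e^{2\pi i/3}$, introducing the conformal variable $\zeta$ defined by $\frac23\zeta^{3/2} = \int_{\tilde\eta_0}^{\tilde\eta}\sqrt{Q(s)}\,ds$ so that the solution is approximated by Airy functions $\Ai,\Bi$ of argument $\mu^{2/3}\zeta$; (2) track the subleading contribution of the term $-\tilde\varphi(\tilde\eta,\mu) = \mathcal{O}(\mu^{-1})$ — crucially its $\mathcal{O}(\mu^{-1})$ piece is $\dfrac{\tilde B(\mu)\tilde\eta/\tilde A(\mu)}{\mu(\tilde\eta-\tilde A(\mu))}$, whose contour integral against $(Q(s))^{-1/2} = (4s^3-4)^{-1/2}$ produces exactly the exponent correction $2\tilde G_1$ appearing in \eqref{asym-Stokes-multipliers-case-II-situation-II}, after absorbing the constant $\frac14$ and the factor $1/\sqrt4$ into the normalization; (3) match the WKB solutions valid on adjacent Stokes sectors $\Omega_k$ through the turning point, read off the Stokes multiplier $s_2$ as the (exponentially large) off-diagonal connection coefficient, which gives $s_2 = i\exp\{-2\mu G_0 + 2\tilde G_1 + o(1)\}$ with $G_0 = \frac25 e^{2\pi i/3}\B(\tfrac12,\tfrac13)$ coming from evaluating the full action integral $2\int^{\infty e^{2\pi i/3}}\!\!\!\sqrt{4s^3-4}\,ds$ via the Beta-function identity used already in Lemma~\ref{lem-case-II}; and (4) obtain $s_3 = s_{-2}$ and the relation $s_3 = -\overline{s_2}$ directly from the symmetry \eqref{eq-sk-s-k-relation} applied at $t=0$ with real $a$ but (here) possibly the appropriate reality structure of $b$, together with $\overline{G_0}=H_0$, $\overline{\tilde G_1}=\tilde H_1$; the periodicity constraint $s_{k+5}=s_k$ and $s_k = i(1+s_{k+2}s_{k+3})$ in \eqref{eq-constraints-stokes-multipliers} then confirm consistency.

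The main obstacle I anticipate is step (2): controlling the error term uniformly in $\tilde A(\mu), \tilde B(\mu)$ while $\tilde\eta$ ranges over an unbounded region that must avoid the moving pole at $\tilde\eta = \tilde A(\mu)$. Because the hypothesis here is only $|\tilde A(\mu)|^{-1}, |\tilde B(\mu)|^{-1} = \mathcal{O}(1)$ — i.e.\ $\tilde A(\mu)$ and $\tilde B(\mu)$ may be \emph{large} — the term $\dfrac{\tilde B(\mu)\tilde\eta/\tilde A(\mu)}{\mu(\tilde\eta-\tilde A(\mu))}$ is genuinely of size $\mathcal{O}(\mu^{-1})$ only after one checks that $\tilde B(\mu)/\tilde A(\mu) = \mathcal{O}(1)$ (which follows from $\frac{\tilde B^2}{4}\sim\tilde A^3$, hence $\tilde B/\tilde A\sim 2\tilde A^{1/2}$ — so in fact this ratio is \emph{not} bounded, and one must instead verify that after the rescaling already built into \eqref{eq:transform2} the combination stays controlled, exactly as arranged by the choice \eqref{eq:mu} of the large parameter). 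Verifying that the contour for $\tilde G_1$ can be deformed to pass from $e^{2\pi i/3}$ to $\infty e^{2\pi i/3}$ without crossing $s = \tilde A(\mu)$, and that the resulting integral converges and is $\mathcal{O}(1)$, is the delicate bookkeeping; once that is in place, the Airy-matching and the extraction of $s_2, s_3$ are routine and identical in structure to the proofs of the preceding three lemmas, so I would present those parts briefly and refer to \cite{APC,Long-Li-Liu-Zhao} for the standard estimates.
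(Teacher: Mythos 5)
Your proposal follows essentially the same route as the paper: the paper proves Lemma~\ref{lem-case-II} in detail (Airy-function matching near the turning point $e^{2\pi i/3}$ only, extraction of $s_{2}$ from the connection between $\Phi_{2}$ and $\Phi_{3}$, and $s_{3}=-\overline{s_{2}}$ via the reality symmetry \eqref{eq-sk-s-k-relation}, thereby avoiding the pole/turning-point coalescence near $\tilde\eta=1$), and then obtains Lemma~\ref{lem-case-II-situation-II} by precisely the modifications you describe, namely replacing $\xi$ by $\mu$ and tracking the altered $\mathcal{O}(\mu^{-1})$ term whose contour integral produces $\tilde{G}_{1}$. Your attention to uniformity for large $\tilde{A}(\mu),\tilde{B}(\mu)$ and to keeping the integration ray along $\arg s=\tfrac{2\pi}{3}$ clear of the real pole $s=\tilde{A}(\mu)$ is consistent with how the paper handles these points.
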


\begin{remark}\label{remark-1}
(i) When $a$ or $b\to\infty$ with $M_{2}\leq a^3\left|\frac{b^2}{4}-a^3\right|^{-1}\leq M_{1}$,
we have $\left|\frac{b^2}{4}-a^3\right|=\xi^{\frac{6}{5}}$ and $\frac{b}{4a}=\mathcal{O}(\xi^{\frac{1}{5}})$.
Hence $\frac{b}{4a}\ll\frac{b^2}{4}-a^3$,
which implies that $\xi\sim\mu$ and
\begin{equation}
\begin{aligned}
\mu E_{0}-\tilde{E}_{1}=&\xi E_{0}-E_{1}+o(1),\\
\mu F_{0}-\tilde{F}_{1}=&\xi F_{0}-F_{1}+o(1)
\end{aligned}
\end{equation}
as $\xi\to+\infty$.

(ii) When $\frac{b^2}{4}-a^3\to-\infty$ with $M_{2}\leq a^3\left|\frac{b^2}{4}-a^3\right|^{-1}\leq M_{1}$, we also have $\xi\sim\mu$ and
\begin{equation}
\begin{aligned}
\mu G_{0}-\tilde{G}_{1}=\xi G_{0}-G_{1}+o(1),\\
\mu H_{0}-\tilde{H}_{1}=\xi H_{0}-H_{1}+o(1).
\end{aligned}
\end{equation}
This means that in the overlapping regions $M_{2}\leq a^3\left|\frac{b^2}{4}-a^3\right|^{-1}\leq M_{1}$ of $(a,b)$,
Lemmas \ref{lem-case-I} and \ref{lem-case-II} are consistent with Lemmas \ref{lem-case-I-situation-II} and \ref{lem-case-II-situation-II}, respectively.
\end{remark}

\subsection{Classification of the PI solutions}
A combination of eqs. \eqref{asym-Stokes-multipliers-case-I}, \eqref{asym-Stokes-multipliers-case-I-situation-II} and \eqref{eq-condition-stokes-multipliers}
leads to the following result.
\begin{theorem}\label{thm-1}
Suppose $y(t; a, b)$ is a solution of PI equation with $y(0)=a,y'(0)=b$
and $M_{1},M_{2}$ are two arbitrary constants with $0<M_{2}<M_{1}$.
Then there exists a sequence of curves
\begin{equation}\label{eq-Gamma-n}
\Gamma_{n}:\begin{cases}\frac{b^2}{4}-a^3=f_{n}(a,b), & a^3\leq M_{1}\left|\frac{b^2}{4}-a^3\right|,\\[2mm]
\frac{b^2}{4}-a^3+\frac{b}{4a}=g_{n}(a,b),& a^{3}\geq M_{2}\left|\frac{b^2}{4}-a^3\right|
\end{cases}
\end{equation}
with
\begin{equation}
f_{n}(a,b)
=\left[\frac{5}{2\sqrt{3}\B\left(\frac{1}{2},\frac{1}{3}\right)}
\left(n\pi-\frac{\pi}{2}-\im(E_{1}-F_{1})+o(1)\right)\right]^{\frac{6}{5}}
\end{equation}
and
\begin{equation}
g_{n}(a,b)
=\left[\frac{5}{2\sqrt{3}\B\left(\frac{1}{2},\frac{1}{3}\right)}
\left(n\pi-\frac{\pi}{2}-\im(\tilde{E}_{1}-\tilde{F}_{1})+o(1)\right)\right]^{\frac{6}{5}},
\end{equation}
as $n\to\infty$, such that the corresponding solutions $y(t; a,b)$ with $(a,b)\in\Gamma_{n}$ belong to Type (B) with
\begin{equation}\label{eq-asym-hn}
h:=h_{n}=\begin{cases}
(-1)^{n-1}2\exp\left\{f_{n}(a,b)P_{0}+o(1)\right\},&\quad a^3\leq M_{1}\left|\frac{b^2}{4}-a^3\right|, \\[2mm]
(-1)^{n-1}2\exp\left\{g_{n}(a,b)P_{0}+o(1)\right\},&\quad a^{3}\geq M_{2}\left|\frac{b^2}{4}-a^3\right|
\end{cases}
\end{equation}
as $n\to+\infty$, where $P_{0}=\frac{6}{5}\B\left(\frac{1}{2},\frac{1}{3}\right)$ and $E_{1},F_{1},\tilde{E}_{1},\tilde{F}_{1}$ are given in \eqref{eq-constants-lemma-1} and \eqref{eq-constants-lemma-3}.
Moreover, we have
\begin{itemize}
\item [$\mathrm{(i)}$] If $(a,b)$ lies in the region between $\Gamma_{2m}$ and $\Gamma_{2m+1}$,
the solution $y(t; a,b)$ belongs to Type (A),
\item [$\mathrm{(ii)}$] If $(a,b)$ lies in the region between $\Gamma_{2m-1}$ and $\Gamma_{2m}$,
the solution $y(t; a,b)$ belongs to Type (C).
\end{itemize}
\end{theorem}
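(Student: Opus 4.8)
The plan is to translate the asymptotic formulas for the Stokes multipliers from Lemmas~\ref{lem-case-I}--\ref{lem-case-II-situation-II} into the classification \eqref{eq-condition-stokes-multipliers} via the quantity $1+s_2s_3$, which is the single discriminant separating Types (A), (B), (C). Because of the periodicity and reflection constraints \eqref{eq-constraints-stokes-multipliers}--\eqref{eq-sk-s-k-relation}, all five multipliers are controlled by one or two of them, so it suffices to pin down $s_2s_3$ in each regime. I would first treat case (i). From the cyclic relation $s_k=i(1+s_{k+2}s_{k+3})$ we get $s_0=i(1+s_2s_3)$, and combining the five such relations with $s_1,s_{-1}=s_4$ from Lemma~\ref{lem-case-I}, one expresses $1+s_2s_3$ in terms of $s_1$ and $s_4=s_{-1}=-\overline{s_1}$. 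Writing $s_1=i\exp\{2\xi E_0-2E_1+o(1)\}$ and using $E_0=\overline{F_0}$, $E_1=\overline{F_1}$, the product $s_1 s_{-1}=-s_1\overline{s_1}=-|s_1|^2$ has modulus $\exp\{2\xi\,\re(E_0+F_0)-2\re(E_1+F_1)+o(1)\}$ while its phase comes from $2\xi\,\im(E_0-F_0)-2\im(E_1-F_1)$ (up to the $i\cdot i=-1$). Since $\re E_0=\frac35\B(\tfrac12,\tfrac13)>0$, this modulus is exponentially large, so $1+s_2s_3$ is dominated by a term whose sign oscillates with the phase $2\xi\,\im(E_0-F_0)-2\im(E_1-F_1)$; the zeros of the appropriate combination, i.e. where this phase equals $(2n-1)\pi/2$ mod $\pi$, define the curves $\Gamma_n$. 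Plugging $\im(E_0-F_0)=-\frac{\sqrt3}{5}\B(\tfrac12,\tfrac13)$ (so $\im(E_0-F_0)$ carries the factor $\sqrt3\,\B(\tfrac12,\tfrac13)/5$ after accounting for $E_0-F_0=2\re(\ldots\,i\ldots)$) and $\xi=\bigl(\frac{b^2}{4}-a^3\bigr)^{5/6}$ and solving for $\frac{b^2}{4}-a^3$ yields exactly $f_n(a,b)$.

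Next I would carry out the identical bookkeeping in case (ii) using Lemmas~\ref{lem-case-I-situation-II}--\ref{lem-case-II-situation-II}, now with $\mu$ in place of $\xi$ and $\tilde E_1,\tilde F_1$ in place of $E_1,F_1$; Remark~\ref{remark-1} guarantees the two descriptions agree on the overlap $M_2\le a^3|\frac{b^2}{4}-a^3|^{-1}\le M_1$, so the curves $\Gamma_n$ glue into a single well-defined sequence. On $\Gamma_n$ itself, $1+s_2s_3=0$, which is precisely condition (B); to get the amplitude $h_n=s_1-s_4$ in \eqref{eq-asym-hn} I would substitute the on-curve values back into $s_1-s_{-1}=s_1+\overline{s_1}=2\re s_1$. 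On $\Gamma_n$ the phase of $s_1$ is a half-integer multiple of $\pi$ shifted by $o(1)$, so $s_1\approx i\cdot i^{?}(-1)^{n-1}e^{(\ldots)}$ is essentially real and $h_n=(-1)^{n-1}2\exp\{\re(2\xi E_0)-2\re E_1+o(1)\}$; recognizing $\re(2E_0)=\frac65\B(\tfrac12,\tfrac13)=P_0$ and $2\xi\cdot\frac12 P_0=f_n(a,b)P_0$ after re-expressing $\xi$ through $f_n$, and noting $\re E_1\to0$ relative to the leading term, gives the stated $h_n$. The alternation statements (i)--(ii) then follow because between consecutive curves the sign of $1+s_2s_3$ is fixed by the sign of $\cos$ (or $\sin$) of the slowly varying phase, which flips each time $n$ increments: on one side $1+s_2s_3>0$ (Type A), on the other $1+s_2s_3<0$ (Type C).

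The main obstacle I anticipate is twofold. First, the careful tracking of phases and branch choices: the $i$'s in the Stokes constraints, the $\arg$ of the square roots $(s^3\pm1)^{1/2}$ along the rays $\arg s=\pi/3$ and $2\pi/3$ that define $E_1,G_1$ etc., and the real/imaginary decomposition of $E_0=\frac35\B(\tfrac12,\tfrac13)(1-\sqrt3 i)$ must all be handled consistently so that the quantization condition comes out as $n\pi-\pi/2$ and not shifted. Getting the sign $(-1)^{n-1}$ in $h_n$ correct is the most delicate piece. Second, one must justify that the $o(1)$ error terms, which are uniform in $A(\xi),B(\xi)=\mathcal O(1)$ (respectively $|\tilde A|^{-1},|\tilde B|^{-1}=\mathcal O(1)$) by the Lemmas, remain uniform along the curves $\Gamma_n$ and do not destroy the transversality needed to conclude that $1+s_2s_3$ actually changes sign (rather than merely being small) in the regions between curves; this uses that the leading phase $2\xi\,\im(E_0-F_0)$ is strictly monotincreasing in $\xi$ with derivative bounded away from zero, so an implicit-function / intermediate-value argument locates exactly one curve $\Gamma_n$ per period and pins down the sign pattern on either side.
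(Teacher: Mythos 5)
Your proposal follows essentially the same route as the paper's proof: it uses the cyclic Stokes relation to write $s_{0}=\frac{s_{1}+s_{-1}-i}{-s_{1}s_{-1}}$ with $s_{-1}=-\overline{s_{1}}$, substitutes the asymptotics of Lemmas~\ref{lem-case-I} and \ref{lem-case-I-situation-II} so that $s_{0}/i$ becomes a real quantity oscillating like $\cos\left(-\im(E_{0}-F_{0})\xi+\im(E_{1}-F_{1})\right)$, locates $\Gamma_{n}$ at the zeros near $n\pi-\pi/2$, evaluates $h=s_{1}-s_{4}=2\re s_{1}$ on the curve to get the $(-1)^{n-1}$ amplitude, and reads off the alternation from the sign of $\im s_{0}$ between consecutive zeros, with Remark~\ref{remark-1} gluing the two regimes. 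Aside from minor slips (the oscillatory factor comes from the sum $s_{1}+s_{-1}$, not from the real product $s_{1}s_{-1}$, and a stray factor of $2$ in $\im(E_{0}-F_{0})$ that you later correct), this is the paper's argument.
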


\begin{figure}[!ht]
\label{fig-initial-real-tronquee}
\centering\includegraphics[width=0.7\textwidth]{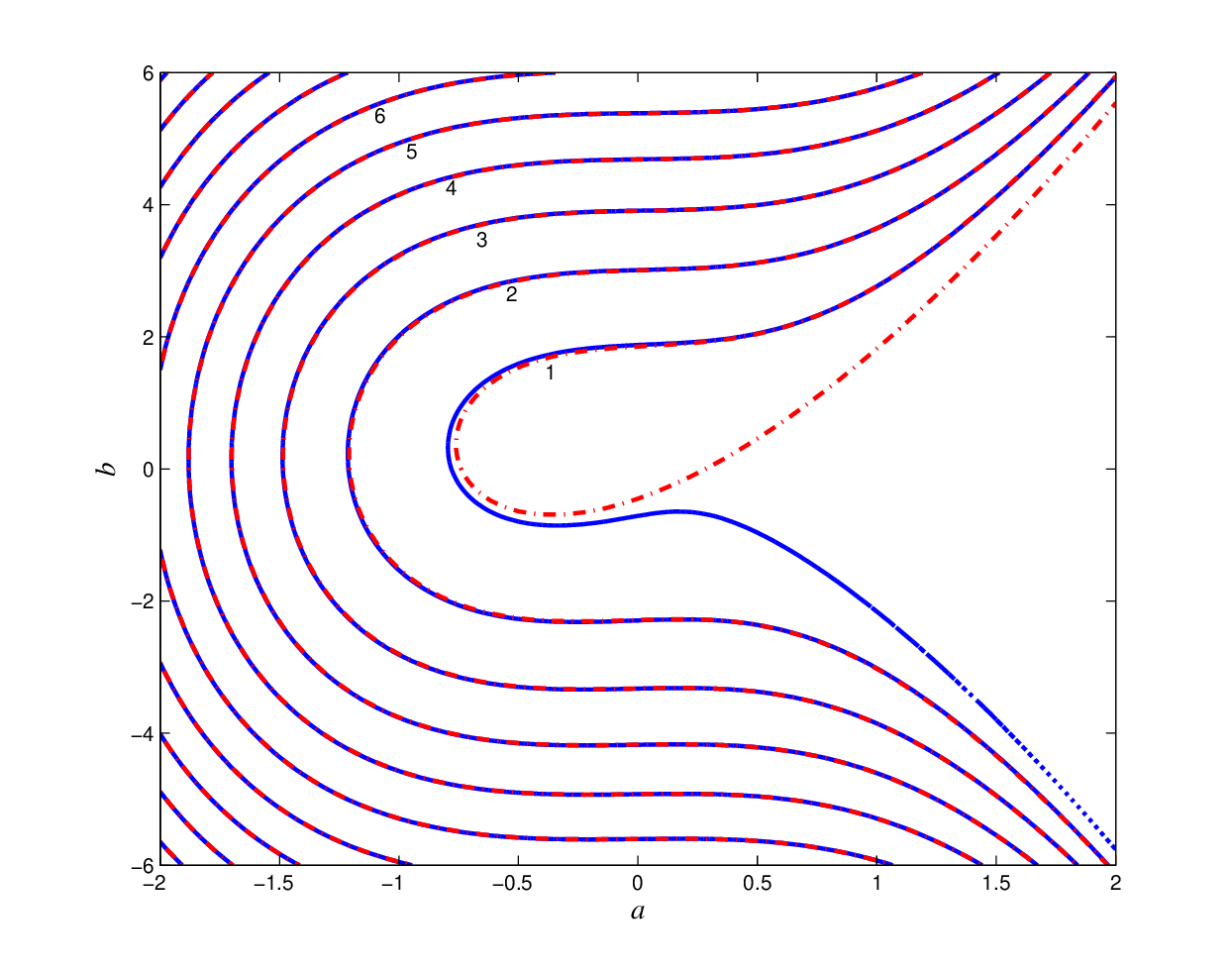}
  \caption{The curves $\Gamma_{n}$ of $(a,b)$ which give rise to the real tronqu\'{e}e solutions of PI. The blue lines are the large-$n$ asymptotics of $\Gamma_{n}$ given in \eqref{eq-Gamma-n},
  and the red dash-dotted curves are the exact values of $\Gamma_{n}$ obtained by numerical simulations.}
\end{figure}

\begin{proof}
We shall assume $a^3\leq M_{1}\left|\frac{b^2}{4}-a^3\right|$ here,
and the case $a^{3}\geq M_{2}\left|\frac{b^2}{4}-a^3\right|$ can be proved by a similar manner.

Noting that $s_{k}=i(1+s_{k+2}s_{k+3})$ from (\ref{eq-constraints-stokes-multipliers}),
we have
\begin{equation}\label{eq-s0-by-s1-s-1}
s_{0}=\frac{s_{1}+s_{-1}-i}{-s_{1}s_{-1}}.
\end{equation}
Substituting the leading asymptotic behaviour of $s_{1}$ and $s_{-1}$ as $\xi\to+\infty$ given in Lemma \ref{lem-case-I} into \eqref{eq-s0-by-s1-s-1},
we get
\begin{equation}\label{eq-asymp-s0}
s_{0}=ie^{\re(E_{0}+F_{0})\xi+\re(E_{1}+F_{1})+o(1)}\left[\cos\left(-\im(E_{0}-F_{0})\xi+\im(E_{1}-F_{1})\right)+o(1)\right]
\end{equation}
as $\xi\to+\infty$.
Regarding $s_0=s_0(\xi)$ as a function of $\xi$,
It is clear that $s_{0}$ has a sequence of zeros.
Let$\{\xi_{n}\}$ denote the zero of $s_{0}(\xi)$ that is near the $n$-th zero of the $\cos$ function in \eqref{eq-asymp-s0} on the positive real axis.
Then,
\begin{equation}
\xi_{n}=\frac{n\pi-\frac{\pi}{2}-\im(E_{1}-F_{1})+o(1)}{-\im(E_{0}-F_{0})},\qquad n\to+\infty.
\end{equation}
This proves the existence of $\Gamma_{n}$.
Substituting \eqref{asym-Stokes-multipliers-case-I} into \eqref{eq-parameter-h}
gives the asymptotic of $h_{n}$ in \eqref{eq-asym-hn}.
Moreover, we also have
\begin{equation}
\im s_{0}\begin{cases}>0,&\quad \xi\in(\xi_{2m},\xi_{2m+1}),\\
<0, & \quad \xi\in(\xi_{2m-1},\xi_{2m}),
\end{cases}
\end{equation}
which, together with \eqref{eq-condition-stokes-multipliers}, implies (i) and (ii);
thus completes the proof.
\end{proof}

\begin{remark}
In Theorem \ref{thm-1}, we only show the existence of $\Gamma_{n}$
and derive the limiting form equation of $\Gamma_{n}$ as $n\to+\infty$.
This is because we have get the leading asymptotic behavior of $s_{0}(\xi)$ as $\xi\to+\infty$. Although the limiting form equations of $\Gamma_{n}$ as $n\to+\infty$ is seperated into two parts for $a^3\leq M_{1}\left|\frac{b^2}{4}-a^3\right|$ and $a^{3}\geq M_{2}\left|\frac{b^2}{4}-a^3\right|$ respectively, it can be seen from Remark \ref{remark-1} that they are consistent in the overlapping region $M_{2}\leq a^3\left|\frac{b^2}{4}-a^3\right|^{-1}\leq M_{1}$.  It should be noted that we can not say $\xi_{n}$ is exactly the $n$-th zero of $s_{0}(\xi)$
since we only obtain the asymptotic behavior of $s_{0}(\xi)$ as $\xi\to+\infty$ in \eqref{eq-asymp-s0}.
Nonetheless, the numerical evidence shows $\xi_n$ is indeed the $n$-th zero of $s_{0}(\xi)$.
This is a coincidence.
\end{remark}

According to Lemmas \ref{lem-case-II} and \ref{lem-case-II-situation-II}, we obtain the following theorem.
\begin{theorem}\label{thm-2}
Let $y(t; a, b)$ be a solution of PI equation with $y(0)=a,y'(0)=b$, then there exist
two constants $M_{3}, M_{4}>0$ such that $y(t; a, b)$ belong to Type (C) provided that $a^3-\frac{b^2}{4}-\frac{b}{4a}>M_{3}$ and $a>M_{4}$.
\end{theorem}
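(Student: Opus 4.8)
\noindent The plan is to establish the Type~(C) criterion $1+s_2 s_3<0$ from \eqref{eq-condition-stokes-multipliers} under the stated hypotheses, the constants $M_3,M_4>0$ being fixed (large) only at the end. Since the hypotheses force $\tfrac{b^2}{4}-a^3+\tfrac{b}{4a}<-M_3<0$, the relevant tools are Lemma~\ref{lem-case-II} (which treats $\tfrac{b^2}{4}-a^3\to-\infty$) and Lemma~\ref{lem-case-II-situation-II} (which treats $\tfrac{b^2}{4}-a^3+\tfrac{b}{4a}\to-\infty$). First I would check that the hypotheses actually land $(a,b)$ inside the scalings of these lemmas: regarding $a^3-\tfrac{b^2}{4}-\tfrac{b}{4a}$ as a downward-opening parabola in $b$ with vertex at $b=-\tfrac1{2a}$, its positivity together with $a>M_4$ confines $b$ to an interval of the form $|b|\le 2a^{3/2}\bigl(1+o(1)\bigr)$ as $a\to\infty$; consequently $\bigl|\tfrac{b^2}{4}-a^3\bigr|\le a^3$ for $a$ large, placing one (up to the overlap of the two cases) in the case-(ii) scaling.

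Next I would split according to a threshold on $|b|$. If $|b|\ge a^{3/2}$, set $\mu^{6/5}=a^3-\tfrac{b^2}{4}-\tfrac{b}{4a}$ (which is $>M_3$) and apply Lemma~\ref{lem-case-II-situation-II}: the two-sided bound $M_3\le\mu^{6/5}\le 2a^3$ (valid for $a$ large, from the confinement of $b$) together with $|b|\ge a^{3/2}$ gives $|\tilde A(\mu)|^{-1}=\mu^{2/5}/a=\mathcal{O}(1)$ and $|\tilde B(\mu)|^{-1}=\mu^{3/5}/|b|=\mathcal{O}(1)$, so the lemma is applicable. If instead $|b|<a^{3/2}$, then $\tfrac{b^2}{4}-a^3<-\tfrac34 a^3\to-\infty$; setting $\xi^{6/5}=a^3-\tfrac{b^2}{4}$ one finds $a^3\le\tfrac43\bigl|\tfrac{b^2}{4}-a^3\bigr|$, $A(\xi)=a\xi^{-2/5}\in[1,(4/3)^{1/3}]$ and $|B(\xi)|=|b|\xi^{-3/5}\le(4/3)^{1/2}$, so Lemma~\ref{lem-case-II} applies. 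In either branch the relevant large parameter, $\xi$ or $\mu$, can be pushed above any prescribed level by enlarging $M_3$ and $M_4$.

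Writing $\kappa$ for that large parameter, I would then combine $s_2$ and $s_3$ from the applicable lemma. Because the initial data are real, $s_3=-\overline{s_2}$, hence $s_2 s_3=-|s_2|^2$; inserting the expression for $s_2$ and using $\re G_0=\tfrac25\,\B\left(\tfrac12,\tfrac13\right)\cos\tfrac{2\pi}{3}=-\tfrac15\,\B\left(\tfrac12,\tfrac13\right)$ gives
\[
|s_2|^2=\exp\Bigl\{\tfrac45\,\kappa\,\B\left(\tfrac12,\tfrac13\right)+4\,\re G_1+o(1)\Bigr\},
\]
with $G_1$ replaced by $\tilde G_1$ in case~(ii). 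The term $4\,\re G_1$ (resp.\ $4\,\re\tilde G_1$) is real and bounded uniformly over the admissible parameter range; granting this, once $M_3,M_4$ are large enough that $\kappa$ exceeds the level past which $|o(1)|<\tfrac1{10}\,\B\left(\tfrac12,\tfrac13\right)$ and $\tfrac45\kappa\,\B\left(\tfrac12,\tfrac13\right)$ dominates the bounded term, the exponent is positive, so $|s_2|^2>1$, i.e. $s_2 s_3<-1$ and $1+s_2 s_3<0$. That is the Type~(C) criterion, so the theorem follows.

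The step I expect to be the main obstacle is the \emph{uniform} boundedness of $\re G_1$ and $\re\tilde G_1$ over the full, non-compact ranges of $(A,B)$ and $(\tilde A,\tilde B)$ — in particular when $\tilde A\to\infty$, which is precisely what happens when $|b|/a^{3/2}$ stays bounded away from $0$ while $a\to\infty$. There the prefactor $\tilde B/\tilde A$ occurring in $\tilde G_1$ grows like $\tilde A^{1/2}$, so one has to extract compensating decay $\mathcal{O}(\tilde A^{-1/2})$ from the remaining integral $\int_{e^{2\pi i/3}}^{\infty e^{2\pi i/3}}\tfrac{s\,ds}{(s^3-1)^{1/2}(s-\tilde A)}$; this can be done using that the contour lies on the ray $\arg s=\tfrac{2\pi}{3}$, so that $|s-\tilde A|\ge\tfrac{\sqrt3}{2}\bigl(|s|+\tilde A\bigr)$, with the integrable endpoint singularity of $(s^3-1)^{-1/2}$ treated separately. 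The remaining work — tracking how large $M_3$ must be relative to $M_4$ so that the $o(1)$ errors inherited from Lemmas~\ref{lem-case-II} and~\ref{lem-case-II-situation-II} are genuinely negligible throughout the region — is routine bookkeeping.
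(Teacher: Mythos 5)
Your proposal is correct and follows essentially the same route as the paper: reduce the Type~(C) criterion to $|s_2|>1$ via $s_2s_3=-|s_2|^2$, then invoke Lemma~\ref{lem-case-II} or Lemma~\ref{lem-case-II-situation-II} depending on which scaling regime $(a,b)$ falls into, using $\re G_0=-\tfrac15\B\left(\tfrac12,\tfrac13\right)<0$ to conclude. You supply details the paper's terse proof omits (the explicit threshold on $|b|$ for the case split, and the uniform boundedness of $\re\tilde G_1$ as $\tilde A\to\infty$), but the underlying argument is the same.
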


\begin{proof}
According to \eqref{eq-condition-stokes-multipliers}, we only need to show that $\im s_{0}<0$
as $a\to+\infty$ and $\frac{b^2}{4}-a^3+\frac{b}{4a}\to-\infty$.
Since $s_{0}=i(1+s_{2}s_{3})=i(1-|s_{2}|)$ from \eqref{eq-constraints-stokes-multipliers},
it is equivalent to show that $|s_{2}|>1$ when $a$ and $a^3-\frac{b^2}{4}-\frac{b}{4a}$ are both large positive.

When $a\to+\infty$ with $a^3\leq M_{1}\left|\frac{b^2}{4}-a^3\right|$,
we have $\left|\frac{b^2}{4}-a^3\right|\to+\infty$ and $\frac{b}{4a}\ll\left|\frac{b^2}{4}-a^3\right|$,
which, together with $a^3-\frac{b^2}{4}-\frac{b}{4a}\to+\infty$, implies that $\frac{b^2}{4}-a^3\to-\infty$.
According to Lemma \ref{lem-case-II}, we conclude that $|s_{2}|>1$.

When $a\to+\infty$ with $a^{3}\geq M_{2}\left|\frac{b^2}{4}-a^3\right|$,
it immediately follows from Lemma \ref{lem-case-II-situation-II} that $|s_{2}|>1$ when $a^3-\frac{b^2}{4}-\frac{b}{4a}$ is large enough.

This completes the proof of Theorem~\ref{thm-2}.
\end{proof}

The above Theorems \ref{thm-1} and \ref{thm-2} are the major results of the current work.
They contain several important results in the previous work as special cases:
(i) when one of $a$ and $b$ is fixed and the other increases,
we can deduce the corresponding results in \cite[Theorem 1]{Long-Li-Liu-Zhao} and \cite[Theorem 2]{Long-Li-Liu-Zhao} from Theorem \ref{thm-1};
and (ii) \cite[Theorem 3]{Long-Li-Liu-Zhao} is a direct consequence of Theorem \ref{thm-2} above.

Theorems \ref{thm-1} and \ref{thm-2} could also provide new results and phenomena.
For instance, as particular cases of Theorems \ref{thm-1} and \ref{thm-2}, we find how the PI solutions evolve when $(a,b)$ travels along the curve $\frac{b^2}{4}-a^3=0$. The two cases $b\to+\infty$ and $b\to-\infty$ have a significant difference.
\begin{corollary}\label{cor-1}
There exists a positive increasing sequence $0<c_{1}<c_{2}<\cdots$ with
\begin{equation}\label{eq-cn-asym}
c_{n}=4\left[\frac{5\sqrt{3}}{6\B\left(\frac{1}{2},\frac{1}{3}\right)}
\left(n\pi+\frac{\pi}{2}+o(1)\right)\right]^{\frac{12}{5}},\quad n\to+\infty
\end{equation}
such that $y(t; c_{n},2(c_{n})^{3/2})$ belongs to Type (B).
In addition, the solution $y(t; a, 2a^{3/2})$ is of Type (A) when $a\in(c_{2m},c_{2m+1})$,
and of Type (C) when $a\in(c_{2m-1},c_{2m})$.
\end{corollary}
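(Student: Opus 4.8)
The plan is to specialise the analysis behind Theorem~\ref{thm-1} to the single curve $\Gamma_{0}=\{(a,2a^{3/2}):a>0\}$. Along $\Gamma_{0}$ one has $\frac{b^{2}}{4}-a^{3}=0$, hence $a^{3}\big|\frac{b^{2}}{4}-a^{3}\big|^{-1}=+\infty\ge M_{2}$, so we are in case~(ii); moreover $\frac{b^{2}}{4}-a^{3}+\frac{b}{4a}=\frac{\sqrt a}{2}>0$ and tends to $+\infty$ as $a\to+\infty$, so we are in the ``$+1$'' branch of \eqref{eq-two-cases-situation-II}, governed by Lemma~\ref{lem-case-I-situation-II}, with $\mu$ defined by $\mu^{6/5}=\frac{\sqrt a}{2}$. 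Imposing $b=2a^{3/2}$ on $a=\tilde A(\mu)\mu^{2/5}$, $b=\tilde B(\mu)\mu^{3/5}$ gives $\tilde B(\mu)=2\tilde A(\mu)^{3/2}$, and the constraint in \eqref{eq-two-cases-situation-II} then forces the exact values $\tilde A(\mu)=4\mu^{2}$, $\tilde B(\mu)=16\mu^{3}$; in particular $\tilde B(\mu)/\tilde A(\mu)=4\mu$ and $|\tilde A(\mu)|^{-1},|\tilde B(\mu)|^{-1}=\mathcal O(1)$, so Lemma~\ref{lem-case-I-situation-II} is applicable.

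The next step is to run the argument in the proof of Theorem~\ref{thm-1} in case~(ii): substituting $s_{1},s_{-1}$ from Lemma~\ref{lem-case-I-situation-II} into \eqref{eq-s0-by-s1-s-1} yields
\begin{equation}
s_{0}=s_{0}(\mu)=ie^{\re(E_{0}+F_{0})\mu+\re(\tilde E_{1}+\tilde F_{1})+o(1)}\bigl[\cos\bigl(-\im(E_{0}-F_{0})\,\mu+\im(\tilde E_{1}-\tilde F_{1})\bigr)+o(1)\bigr]
\end{equation}
as $\mu\to+\infty$, so $s_{0}(\mu)$ has a sequence of zeros $\mu_{n}$ with $-\im(E_{0}-F_{0})\,\mu_{n}=n\pi-\frac{\pi}{2}-\im(\tilde E_{1}-\tilde F_{1})+o(1)$. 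Since $a=4\mu^{12/5}$ is strictly increasing in $\mu$, these produce a strictly increasing positive sequence $c_{n}$ with $\sqrt{c_{n}}/2=\mu_{n}^{6/5}$, and $y(t;c_{n},2c_{n}^{3/2})$ is of Type~(B) because $s_{0}=0$ there. For the remaining data on $\Gamma_{0}$: since $y(0),y'(0)$ are real we have $s_{2}=-\overline{s_{3}}$, hence $s_{0}=i(1+s_{2}s_{3})=i(1-|s_{3}|^{2})$ is purely imaginary and $\im s_{0}$ has the same sign as $1+s_{2}s_{3}$, i.e.\ $\im s_{0}>0$ (resp.\ $<0$) corresponds to Type~(A) (resp.\ Type~(C)) by \eqref{eq-condition-stokes-multipliers}; since the cosine factor changes sign as $\mu$ crosses each $\mu_{n}$, a short check of signs gives that the solution is of Type~(A) for $\mu\in(\mu_{2m},\mu_{2m+1})$ and of Type~(C) for $\mu\in(\mu_{2m-1},\mu_{2m})$, which is exactly the stated alternation for $a\in(c_{2m},c_{2m+1})$ and $a\in(c_{2m-1},c_{2m})$.

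It remains to evaluate $\lim_{\mu\to+\infty}\im(\tilde E_{1}-\tilde F_{1})$ along $\Gamma_{0}$. As $\tilde F_{1}=\overline{\tilde E_{1}}$ this equals $2\lim_{\mu\to\infty}\im\tilde E_{1}$, and from \eqref{eq-constants-lemma-3} together with $\tilde B/\tilde A=4\mu$ and $\tilde A=4\mu^{2}$,
\begin{equation}
\tilde E_{1}=\mu\int_{e^{\pi i/3}}^{\infty e^{\pi i/3}}\frac{s\,ds}{(s^{3}+1)^{1/2}\,(s-4\mu^{2})}.
\end{equation}
The pole $s=4\mu^{2}$ recedes to infinity but stays bounded away from the contour $\arg s=\pi/3$, and the dominant contribution is concentrated where $|s|\sim\mu^{2}$; rescaling $s=4\mu^{2}\tau$ there and letting $\mu\to\infty$ (the portion $|s|\le\mu^{2-\delta}$ contributing only $o(1)$ after multiplication by $\mu$) gives $\tilde E_{1}\to\frac{e^{2\pi i/3}}{2i}\int_{0}^{\infty}\frac{d\tau}{\tau^{1/2}(\tau e^{\pi i/3}-1)}$, and the elementary integral equals $\pi e^{-2\pi i/3}$, so $\tilde E_{1}\to-\frac{i\pi}{2}$ and $\im(\tilde E_{1}-\tilde F_{1})\to-\pi$. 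Feeding this and $-\im(E_{0}-F_{0})=\frac{2\sqrt3}{5}\B\left(\frac{1}{2},\frac{1}{3}\right)$ into the relation for $\mu_{n}$ gives $\mu_{n}=\frac{5}{2\sqrt3\,\B\left(\frac{1}{2},\frac{1}{3}\right)}\bigl(n\pi+\frac{\pi}{2}+o(1)\bigr)$, and then $c_{n}=4\mu_{n}^{12/5}$ is exactly \eqref{eq-cn-asym}, since $\frac{5}{2\sqrt3}=\frac{5\sqrt3}{6}$.

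The main obstacle is the evaluation of $\lim\im\tilde E_{1}$: because $\tilde A(\mu)=4\mu^{2}\to\infty$, the naive expansion $\frac{1}{s-\tilde A}\approx-\frac1{\tilde A}$ is not uniform over the infinite contour (it destroys integrability at $s=\infty$), so one must identify the correct rescaling, verify that the discarded pieces are genuinely $o(1)$ after multiplication by $\mu$, and — most delicately — determine the $\mathcal O(1)$ imaginary part $-\frac{i\pi}{2}$, which hinges on the branch of $(s^{3}+1)^{1/2}$ along $\arg s=\pi/3$ inherited from the construction in Lemma~\ref{lem-case-I-situation-II}; it is precisely this constant that replaces $n\pi-\frac{\pi}{2}$ by $n\pi+\frac{\pi}{2}$. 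It is also worth noting that the sign $b=+2a^{3/2}>0$ is essential: the choice $b=-2a^{3/2}$ gives $\frac{b}{4a}<0$ and lands one in Lemma~\ref{lem-case-II-situation-II}, the qualitatively different $b\to-\infty$ regime referred to just before the corollary.
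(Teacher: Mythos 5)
Your proposal is correct and follows the same skeleton as the paper's proof: specialize to the curve $b=2a^{3/2}$, note that this lies in case (ii) with $\frac{b^{2}}{4}-a^{3}+\frac{b}{4a}=\frac{\sqrt a}{2}=\mu^{6/5}$, deduce $\tilde A(\mu)=4\mu^{2}$, $\tilde B(\mu)=16\mu^{3}$, invoke Theorem~\ref{thm-1} (equivalently, rerun its proof with Lemma~\ref{lem-case-I-situation-II}), and reduce everything to showing $\im(\tilde E_{1}-\tilde F_{1})\to-\pi$, which converts $n\pi-\frac{\pi}{2}$ into $n\pi+\frac{\pi}{2}$ and yields \eqref{eq-cn-asym} via $c_{n}=4\mu_{n}^{12/5}$.

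The one step where you genuinely diverge is the evaluation of the constant. The paper computes the \emph{difference} $\tilde E_{1}-\tilde F_{1}$ by joining the two rays $\arg s=\pm\pi/3$ into a closed contour and picking up the residue at the pole $s=\tilde A(\mu)$, giving $-\frac{2\pi i\,\tilde B}{4(\tilde A^{3}+1)^{1/2}}=-\pi i+\mathcal O(\mu^{-1})$ at once, with the leftover finite arc visibly $\mathcal O(\mu^{-1})$. You instead evaluate $\tilde E_{1}$ alone by the rescaling $s=4\mu^{2}\tau$, isolating the dominant contribution from $|s|\sim\mu^{2}$ and computing the resulting elementary integral to get $\tilde E_{1}\to-\frac{i\pi}{2}$, then use $\tilde F_{1}=\overline{\tilde E_{1}}$. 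Both give the same answer (I checked your elementary integral: $\int_{0}^{\infty}\frac{d\tau}{\tau^{1/2}(\tau e^{\pi i/3}-1)}=\pi e^{-2\pi i/3}$, hence $\tilde E_{1}\to\frac{e^{\pi i/6}}{2}\pi e^{-2\pi i/3}=-\frac{i\pi}{2}$, consistent with the paper's $-\pi i$ for the difference). The residue route is shorter and automatically localizes the answer at the pole, whereas your route requires the extra verification that the region $|s|\le\mu^{2-\delta}$ contributes $o(\mu^{-1})$ before multiplication by $\mu$ — which you correctly flag and which does hold, since there $|s-4\mu^{2}|\gtrsim\mu^{2}$ and the integrand is $\mathcal O(|s|^{-1/2}\mu^{-2})$. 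On the other hand, your computation gives the individual limit of $\tilde E_{1}$, slightly more information than the paper extracts, and your remarks on the branch of $(s^{3}+1)^{1/2}$ and on why $b=+2a^{3/2}$ (rather than $-2a^{3/2}$) is essential are apt.
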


\begin{figure}[!ht]
\label{fig-initial-real-tronquee}
\centering\includegraphics[width=1\textwidth]{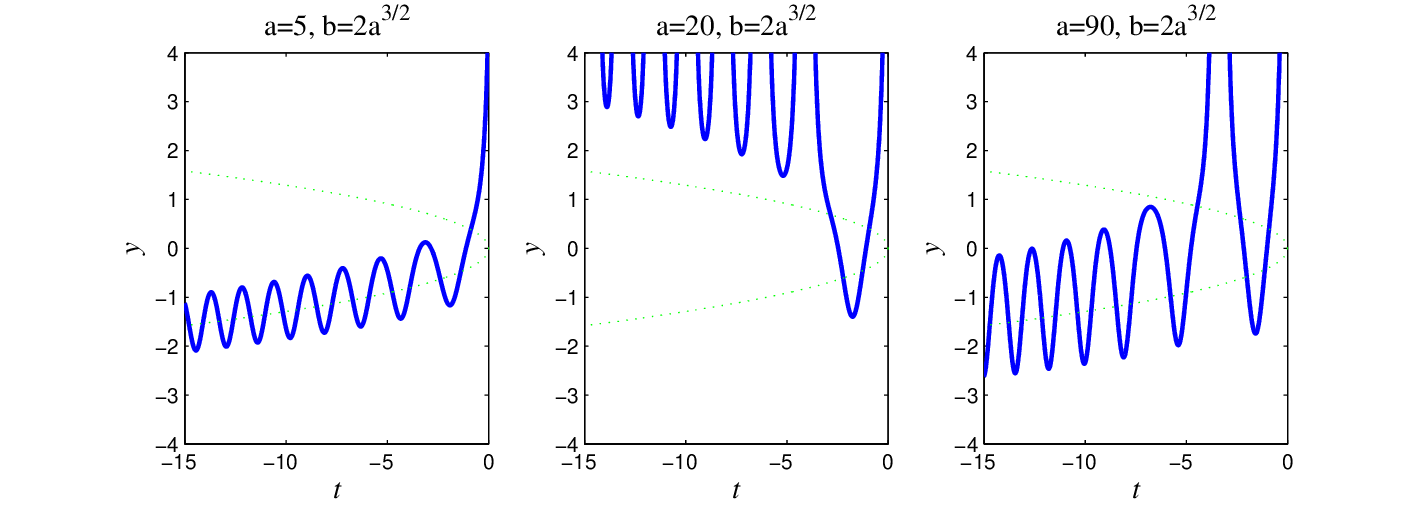}
  \caption{The PI solutions with $a=5,20,90$ and $b=2a^{\frac{3}{2}}$.}
\end{figure}

\begin{proof}
When $b\to+\infty$ with $\frac{b^2}{4}-a^3=0$,
we have $a\to+\infty$ and $a^{-3}\left|\frac{b^2}{4}-a^3\right|\leq M_2$,
which implies that $b\sim 2a^{\frac{3}{2}}$.
According to Theorem \ref{thm-1},
there exists a sequence $\{c_{n}\}$ with
\begin{equation}\label{eq-cn-asymptotic}
c_{n}=4\left[\frac{5\sqrt{3}}{6\B\left(\frac{1}{2},\frac{1}{3}\right)}\left(n\pi-\frac{\pi}{2}-\im(\tilde{E}_{1}-\tilde{F}_{1})+o(1)\right)\right]^{\frac{12}{5}},\quad n\to+\infty
\end{equation}
such that the solution $y(t; c_{n}, 2(c_{n})^{3/2})$ belongs to Type (B).

Since we are assuming $\frac{b^2}{4}-a^3=0$,
using $\frac{b^2}{4}-a^3+\frac{b}{4a}=\mu^{\frac{6}{5}}$ from \eqref{eq:mu},
we find that $\frac{b}{4a}=\frac{a^{\frac{1}{2}}}{2}=\mu^{\frac{6}{5}}$.
It then follows from \eqref{eq:transform2} that
$\tilde{A}(\mu)\sim 4\mu^{2}$ and $\tilde{B}(\mu)\sim 16\mu^{3}$ as $\mu\to+\infty$.
A tedious but straightforward calculation from \eqref{eq-constants-lemma-3} yields
\begin{equation}
\begin{split}
\tilde{E}_{1}-\tilde{F}_{1}&=\frac{1}{4}\int_{e^{\frac{\pi i}{3}}}^{\infty e^{\frac{\pi i}{3}}}\frac{s\tilde{B}(\mu)/\tilde{A}(\mu)}{(s^3+1)^{\frac{1}{2}}(s-\tilde{A}(\mu))}ds-\frac{1}{4}\int_{e^{-\frac{\pi i}{3}}}^{\infty e^{-\frac{\pi i}{3}}}\frac{s\tilde{B}(\mu)/\tilde{A}(\mu)}{(s^3+1)^{\frac{1}{2}}(s-\tilde{A}(\mu))}ds\\
&=-\frac{2\pi i \tilde{B}(\mu)}{4(\tilde{A}(\mu)^3+1)^{\frac{1}{2}}}-\frac{1}{4}\int_{e^{-\frac{\pi i}{3}}}^{e^{\frac{\pi i}{3}}}\frac{s\tilde{B}(\mu)/\tilde{A}(\mu)}{(s^3+1)^{\frac{1}{2}}(s-\tilde{A}(\mu))}ds\\
&=-\pi i+\mathcal{O}(\mu^{-1})
\end{split}
\end{equation}
as $\mu\to+\infty$. Substituting the last equation into \eqref{eq-cn-asymptotic},
we immediately get \eqref{eq-cn-asym}.

The other parts in Corollary \ref{cor-1} are direct consequences of Theorem~\ref{thm-1}.
\end{proof}

\begin{corollary}\label{cor-2}
There exists a constant $M_{5}>0$ such that, whenever $a>M_{5}$,
the corresponding $y(t; a, -2a^{\frac{3}{2}})$ is a Type (C) solution.
\end{corollary}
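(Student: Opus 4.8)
The plan is to derive Corollary \ref{cor-2} as an immediate specialization of Theorem \ref{thm-2}. Substituting $b=-2a^{3/2}$ one has $\frac{b^{2}}{4}=a^{3}$, so the two cubic terms in the quantity governing Theorem \ref{thm-2} cancel and
\[
a^{3}-\frac{b^{2}}{4}-\frac{b}{4a}=-\frac{b}{4a}=\frac{a^{1/2}}{2}\longrightarrow+\infty,\qquad a\to+\infty .
\]
Consequently, as soon as $a>\max\{M_{4},(2M_{3})^{2}\}$ both hypotheses of Theorem \ref{thm-2} hold, and therefore $y(t;a,-2a^{3/2})$ is of Type (C). Taking $M_{5}:=\max\{M_{4},(2M_{3})^{2}\}$ completes the argument.

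It is worth spelling out the mechanism, as it explains why the case $b=-2a^{3/2}$ behaves so differently from the case $b=+2a^{3/2}$ of Corollary \ref{cor-1}. Here $\frac{b^{2}}{4}-a^{3}+\frac{b}{4a}=-\frac{a^{1/2}}{2}\to-\infty$, so we are in the regime of Lemma \ref{lem-case-II-situation-II}, with the large parameter $\mu$ determined by $\mu^{6/5}=\frac{a^{1/2}}{2}$; then \eqref{eq:transform2} gives $\tilde{A}(\mu)\sim 4\mu^{2}$ and $\tilde{B}(\mu)\sim-16\mu^{3}$, so $\tilde{B}(\mu)/\tilde{A}(\mu)\sim-4\mu$. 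The integral occurring in $\tilde{G}_{1}$ satisfies $\int_{e^{2\pi i/3}}^{\infty e^{2\pi i/3}}\frac{s\,ds}{(s^{3}-1)^{1/2}(s-\tilde{A}(\mu))}=\mathcal{O}(\tilde{A}(\mu)^{-1/2})=\mathcal{O}(\mu^{-1})$ (on the ray $\arg s=2\pi/3$ one has $|s-\tilde{A}(\mu)|\ge\max\{\tilde{A}(\mu),|s|\}$, while $\frac{s}{(s^{3}-1)^{1/2}}$ decays like $|s|^{-1/2}$), which offsets the prefactor, so $\tilde{G}_{1}=\mathcal{O}(1)$ — exactly the kind of cancellation already used in the proof of Corollary \ref{cor-1}. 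Since $\re G_{0}=\tfrac{2}{5}\cos\tfrac{2\pi}{3}\,\B\!\left(\tfrac12,\tfrac13\right)=-\tfrac15\B\!\left(\tfrac12,\tfrac13\right)<0$, Lemma \ref{lem-case-II-situation-II} yields $|s_{2}|=\exp\{-2\mu\,\re G_{0}+\mathcal{O}(1)\}=\exp\{\tfrac{2\mu}{5}\B(\tfrac12,\tfrac13)+\mathcal{O}(1)\}\to+\infty$. Hence $s_{0}=i(1+s_{2}s_{3})=i(1-|s_{2}|^{2})$ (using $s_{3}=-\overline{s_{2}}$) has $\im s_{0}<0$ for all large $a$, and \eqref{eq-condition-stokes-multipliers} forces Type (C) — with no alternation among the three types, in sharp contrast with $b=+2a^{3/2}$, where $\frac{b}{4a}>0$ puts us in the oscillatory regime of Lemma \ref{lem-case-I-situation-II} and produces the alternation of Corollary \ref{cor-1}.

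The argument presents no genuine obstacle; the only step needing a little care is the uniform boundedness of $\tilde{G}_{1}$ as $\mu\to+\infty$, which follows by splitting the contour at $|s|\asymp\tilde{A}(\mu)$ just as in the $\tilde{E}_{1}-\tilde{F}_{1}$ computation in the proof of Corollary \ref{cor-1}. Apart from that, the statement is a clean corollary of Theorem \ref{thm-2}.
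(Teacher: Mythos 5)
Your first paragraph is exactly the paper's own proof: substitute $b=-2a^{3/2}$ into Theorem~\ref{thm-2}, observe that $a^{3}-\frac{b^{2}}{4}-\frac{b}{4a}=\frac{1}{2}a^{1/2}\to+\infty$, and take $M_{5}=\max\{4M_{3}^{2},M_{4}\}$ (the paper's version contains a harmless notational slip in the names of the constants). The additional material tracing the mechanism back through Lemma~\ref{lem-case-II-situation-II} is consistent supplementary commentary, but the core argument is the same.
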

\begin{proof}
According to Theorem \ref{thm-2}, there exist two positive constants $M_{2}$ and $M_{3}$ such that $y(t; a, b)$ belongs to Type (C) provided that $a^3-\frac{b^2}{4}-\frac{b}{4a}>M_{2}$ and $a>M_{3}$. When $b=-2a^{\frac{3}{2}}$, the first inequality is equivalent to $\frac{1}{2}a^{\frac{1}{2}}>M_{2}$. Hence, Corollary \ref{cor-2} follows if we choose some $M_{4}>\max\{4M_{2}^2,M_{3}\}$.
\end{proof}

\section{Uniform asymptotics and proof of the Lemmas}
This section focus on the proof of Lemmas \ref{lem-case-I} and \ref{lem-case-II}.
While Lemmas \ref{lem-case-I-situation-II} and \ref{lem-case-II-situation-II}
can be shown in a similar manner except for a few minor modifications.
The method of uniform asymptotics is carried out in details
when $\frac{b^2}{4}-a^3\to\infty$ with $a^3\leq M_{1}\left|\frac{b^2}{4}-a^3\right|$,
and the analysis is divided into two cases.

\subsection{Case I: $\frac{B(\xi)^2}{4}-A(\xi)^3=1$}
In this case, equation \eqref{Schrodinger-equation-scaled} becomes
\begin{equation}\label{Schrodinger-equation-scaled-I}
\begin{aligned}
\frac{d^{2}Y}{d\eta^{2}}
=&\xi^2\left[4(\eta^{3}+1)-\frac{B(\xi)}{\xi(\eta-A(\xi))}+\frac{3}{4}\frac{1}{\xi^2(\eta-A(\xi))^2}\right]Y\\
=&\xi^2 Q(\eta,\xi)Y,
\end{aligned}
\end{equation}
which is a Shr\"{o}dinger equation with three simple turning points,
denoted by $\eta_{j},j=1,2,3$, near $\eta=e^{\pi i/3},\,e^{-\pi i/3}$ and $-1$ respectively;
see Figure \ref{Figure-stokes}.
According to \cite{EGS-2008}, the limiting state of the Stokes geometry of the quadratic form $Q(\eta,\xi)d\eta^2$ as $\xi\rightarrow+\infty$ is described in Figure \ref{Figure-stokes},
where the Stokes curves are defined by $\re(\int_{\eta_{j}}^{\eta}\sqrt{Q(s,\xi)}ds)=0$, $j=1,2,3$. Note that the potential $Q(\eta,\xi)$ also has a pole, denoted by $P$, at $\eta=A(\xi)$,
and $P$ can appear to the left or the right of the curve $\ell_{12}$.
When $P$ is to the right of $\ell_{12}$,
it seems difficult to obtain the uniform asymptotic behavior of $Y$ in the whole sector $\arg{\lambda}\in\left[-\frac{2\pi}{5},\frac{2\pi}{5}\right]$.
Hence, we cannot derive the Stokes multiplier $s_{0}$ directly.
Nevertheless, we are able to calculate $s_{1}$ and $s_{-1}$ by analyzing the uniform asymptotic behaviors in the neighborhoods of the turning points $\eta_{1}$ and $\eta_{2}$ respectively.
Noting that $s_{-1}=-\overline{s_{1}}$, we only need to calculate $s_{1}$.

\begin{figure}[!ht]
\begin{minipage}{0.48\textwidth}
  \centering\includegraphics[width=5cm, scale=1]{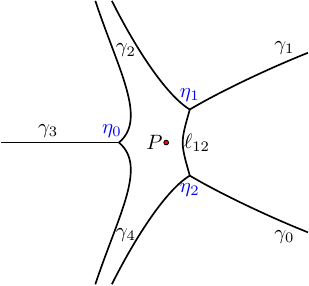}
\end{minipage}
\begin{minipage}{0.48\textwidth}
  \centering\includegraphics[width=5cm, scale=1]{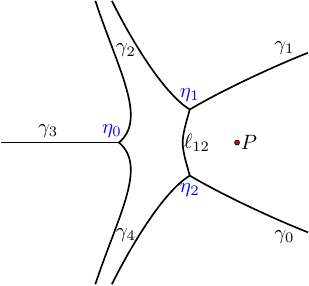}
\end{minipage}
  \caption{Stokes geometry of $Q(\eta,\xi)d\eta^{2}$.}\label{Figure-stokes}
\end{figure}

Following the main ideas in \cite{BCLM}, we can approximate the solutions of (\ref{Schrodinger-equation-scaled-I}) via the Airy functions near $\eta_{1}$ and $\eta_{2}$ respectively. Define two conformal mappings $\zeta(\eta)$ and $\omega(\eta)$ by
\begin{equation}\label{zeta-define}
\int_{0}^{\zeta}s^{\frac{1}{2}}ds=\int_{\eta_{1}}^{\eta}F(s,\xi)^{\frac{1}{2}}ds
\end{equation}
and
\begin{equation}\label{omega-define}
\int_{0}^{\omega}s^{\frac{1}{2}}ds=\int_{\eta_{2}}^{\eta}F(s,\xi)^{\frac{1}{2}}ds,
\end{equation}
respectively from neighborhoods of $\eta=\eta_1$ and $\eta=\eta_2$ to the ones of origin, In the present paper, the principal branches are chosen for all the square roots. Then the conformality can be extended to the Stokes curves, and the following lemma is a consequence of  \cite[Theorem 2]{BCLM}.
\begin{lemma}\label{lem-uniform-case-I}
Let $Y$ be any solution of \eqref{Schrodinger-equation-scaled-I}, then there are constants $C_{1}, C_{2}$ such that
\begin{equation}\label{eq-asym-Y-zeta}
Y=\left(\frac{\zeta}{F(\eta,\xi)}\right)^{\frac{1}{4}}\left\{\left[C_{1}+r_{1}(\eta,\xi)\right]\Ai(\xi^{\frac{2}{3}}\zeta)+\left[C_{2}+r_{2}(\eta,\xi)\right]\Bi(\xi^{\frac{2}{3}}\zeta)\right\},
\end{equation}
where $r_{1}(\eta,\xi),r_{2}(\eta,\xi)=o(|C_{1}|+|C_{2}|)$ as $\xi\rightarrow+\infty$ uniformly for $\eta$ on any two adjacent Stokes lines emanating from $\eta_{1}$ and away from $P$.  Similarly there are constants $\tilde{C}_{1}, \tilde{C}_{2}$ such that
\begin{equation}
Y=\left(\frac{\omega}{F(\eta,\xi)}\right)^{\frac{1}{4}}\left\{\left[\tilde{C}_{1}+\tilde{r}_{1}(\eta,\xi)\right]\Ai(\xi^{\frac{2}{3}}\omega)+\left[\tilde{C}_{2}+\tilde{r}_{2}(\eta,\xi)\right]\Bi(\xi^{\frac{2}{3}}\omega)\right\},
\end{equation}
where $\tilde{r}_{1}(\eta,\xi),\tilde{r}_{2}(\eta,\xi)=o(|\tilde{C}_{1}|+|\tilde{C}_{2}|)$ as $\xi\rightarrow+\infty$ uniformly for $\eta$ on any two adjacent Stokes lines emanating from $\eta_{2}$ and away from $P$.
\end{lemma}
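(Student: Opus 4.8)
The plan is to invoke the general uniform-asymptotics machinery of Bassom--Clarkson--Law--McLeod \cite[Theorem~2]{APC}, verifying that the rescaled Schr\"odinger equation \eqref{Schrodinger-equation-scaled-I} meets its hypotheses after we isolate a neighborhood of the pole $P$ at $\eta=A(\xi)$. First I would write the potential as $Q(\eta,\xi)=F(\eta,\xi)+\xi^{-2}\,(\text{pole term})$, where $F(\eta,\xi)=4(\eta^3+1)-B(\xi)/(\xi(\eta-A(\xi)))$ carries the two dominant orders and the last summand $\tfrac{3}{4}\xi^{-2}(\eta-A(\xi))^{-2}$ is a genuinely lower-order perturbation away from $P$; the turning points $\eta_j$ are the zeros of $F$, lying within $o(1)$ of $e^{\pi i/3}$, $e^{-\pi i/3}$, $-1$ as $\xi\to+\infty$ by a Rouch\'e argument, since the $B(\xi)/\xi$ correction is uniformly $O(\xi^{-1})$ for bounded $A(\xi),B(\xi)$ on the relevant Stokes curves. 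The conformal maps $\zeta(\eta)$, $\omega(\eta)$ of \eqref{zeta-define}--\eqref{omega-define} are well-defined and bi-holomorphic in fixed neighborhoods of $\eta_1$, $\eta_2$ because $F$ has a simple zero there; one then extends $\zeta$, $\omega$ analytically along the two adjacent Stokes curves emanating from each turning point, which is legitimate precisely because no other turning point and no pole is encountered along those arcs once we stay away from $P$.

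The key steps, in order, are: (1) fix the Stokes geometry of $F(\eta,\xi)\,d\eta^2$ and confirm it degenerates to the picture in Figure~\ref{Figure-stokes} as $\xi\to+\infty$, citing \cite{EGS-2008}; (2) check the error-control conditions of \cite[Theorem~2]{APC} --- essentially that the Schwarzian-type remainder $\{\,\zeta;\eta\,\}$-term plus the $\xi^{-2}$ pole term, integrated against $|d\eta|$ along the chosen Stokes arcs, is $o(1)$ uniformly --- which holds because the only place these quantities blow up is at $P$, and we have explicitly excluded a neighborhood of $P$; (3) conclude the existence of the Airy-representation \eqref{eq-asym-Y-zeta} with remainders $r_1,r_2=o(|C_1|+|C_2|)$ on the two adjacent Stokes lines from $\eta_1$, and symmetrically the $\omega$-representation near $\eta_2$; (4) note that the constants $C_1,C_2,\tilde C_1,\tilde C_2$ may depend on $\xi$ (and on the branch of solution $Y$ chosen) but that the representation is valid for \emph{any} solution, as asserted. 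The relation $s_{-1}=-\overline{s_1}$ from \eqref{eq-sk-s-k-relation} is not needed for the lemma itself but explains why only the $\eta_1$ analysis will be pushed further in the sequel.

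The main obstacle is step (2): the theorem in \cite{APC} is stated for potentials that are smooth (pole-free) in the relevant region, so one cannot apply it verbatim. The fix is to observe that along the two adjacent Stokes curves from $\eta_1$, and likewise from $\eta_2$, the pole $P$ stays a bounded distance away (this is exactly the geometric input recorded before the lemma: ``$P$ can appear to the left or the right of $\ell_{12}$'', but never \emph{on} the arcs we use), so on a tubular neighborhood of those arcs the potential $Q(\eta,\xi)$ is holomorphic and the quantity $F^{-1/4}\dfrac{d^2}{d\eta^2}F^{-1/4}\cdot F^{-1/2}$ together with the pole remainder is $O(\xi^{-1})$ in $L^1$ along the contour, uniformly in $A(\xi),B(\xi)$. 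Granting this, \cite[Theorem~2]{APC} applies on each such arc and yields the stated estimates; making the uniformity in the two-parameter family $A(\xi),B(\xi)=O(1)$ fully explicit is the only part that requires care beyond quoting the reference.
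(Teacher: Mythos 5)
Your proposal is correct and follows essentially the same route as the paper: the paper offers no proof beyond defining the conformal maps $\zeta,\omega$ via \eqref{zeta-define}--\eqref{omega-define}, extending their conformality along the Stokes curves, and declaring the lemma ``a consequence of \cite[Theorem 2]{APC}.'' Your additional verification steps --- the decomposition $Q=F+\mathcal{O}(\xi^{-2})$ with $F(\eta,\xi)=4(\eta^3+1)-B(\xi)/(\xi(\eta-A(\xi)))$ (which is exactly the $F$ implicit in the paper's Lemma~\ref{lemma-zeta-omega-eta-infty-relation}), the Rouch\'e localization of the turning points, and the explicit exclusion of a neighborhood of the pole $P$ to control the error integrals uniformly in $A(\xi),B(\xi)=\mathcal{O}(1)$ --- are a more careful justification of the citation than the paper itself provides, but they do not constitute a different method.
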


\begin{remark}\label{remark-r-tilde-r}
It can be seen from the proof of \cite[Theorem 2]{BCLM} that the limit values $\lim\limits_{\eta\to\infty}r_{i}(\eta,\xi)$, $i=1,2$ with $\arg\eta\sim\frac{(2k-1)\pi}{5}$, $k=1,2$ all exist. Similar property is applicable to $\tilde{r}_{1}(\eta,\xi)$ and $\tilde{r}_{2}(\eta,\xi)$. For convenience, we denote these limit values by $r_{i,k}(\xi),\tilde{r}_{i,k}(\xi)$, $i=1,2,$ $k=1,2$ respectively.
\end{remark}

Moreover, we get the asymptotic behavior of $\zeta(\eta)$ and $\omega(\eta)$ as $\xi,\eta\rightarrow\infty$.
\begin{lemma}\label{lemma-zeta-omega-eta-infty-relation}
As $\eta\to\infty$, the conformal mappings $\zeta(\eta)$ and $\omega(\eta)$ satisfy
\begin{equation}\label{zeta-eta-infty-relation}
\frac{2}{3}\zeta^{\frac{3}{2}}=\frac{4}{5}\eta^{\frac{5}{2}}+E_{0}-\frac{E_{1}}{\xi}+\epsilon_{1}(\xi)+\mathcal{O}(\eta^{-\frac{1}{2}}),
\qquad \arg\eta\in \left(-\frac{\pi}{5},\pi\right),
\end{equation}
and
\begin{equation}\label{omega-eta-infty-relation}
\frac{2}{3}\omega^{\frac{3}{2}}=\frac{4}{5}\eta^{\frac{5}{2}}+F_{0}-\frac{F_{1}}{\xi}+\epsilon_{2}(\xi)+\mathcal{O}(\eta^{-\frac{1}{2}}),
\qquad \arg\eta\in \left(-\pi,\frac{\pi}{5}\right),
\end{equation}
where $\epsilon_{1}(\xi),\epsilon_{2}(\xi)=o\left(\xi^{-1}\right)$ as $\xi\rightarrow\infty$,
and $E_{0}, F_{0}, E_{1}$ and $F_{1}$ are given in (\ref{eq-constants-lemma-1}).
\end{lemma}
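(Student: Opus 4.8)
The plan is to establish the asymptotic relations \eqref{zeta-eta-infty-relation} and \eqref{omega-eta-infty-relation} by computing the large-$\eta$ expansion of the two integrals $\int_{\eta_1}^{\eta}F(s,\xi)^{1/2}\,ds$ and $\int_{\eta_2}^{\eta}F(s,\xi)^{1/2}\,ds$, where $F(s,\xi)=4(s^3+1)-B(\xi)/(\xi(s-A(\xi)))+\tfrac34\xi^{-2}(s-A(\xi))^{-2}$ is the potential in \eqref{Schrodinger-equation-scaled-I}. The left-hand side of \eqref{zeta-define} integrates exactly to $\tfrac23\zeta^{3/2}$, so everything reduces to understanding the right-hand side. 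First I would split $F^{1/2}$ as $2(s^3+1)^{1/2}$ plus a correction of size $\mathcal{O}(\xi^{-1})$ coming from the $B/\xi$ term (the $\tfrac34\xi^{-2}$ term contributes only $\mathcal{O}(\xi^{-2})$ after integration and is absorbed into the $\epsilon_i$). Expanding $(F/(4(s^3+1)))^{1/2}=1-\tfrac{1}{8}\,\xi^{-1}B(\xi)\,(s^3+1)^{-1}(s-A(\xi))^{-1}+\cdots$ and multiplying by $2(s^3+1)^{1/2}$, one gets
\[
F(s,\xi)^{1/2}=2(s^3+1)^{1/2}-\frac{1}{4}\,\frac{B(\xi)}{\xi\,(s^3+1)^{1/2}(s-A(\xi))}+\mathcal{O}(\xi^{-2}),
\]
valid uniformly for $s$ bounded away from the turning points and from the pole $P=A(\xi)$.

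Next I would integrate term by term. For the leading piece, $\int_{\eta_1}^{\eta}2(s^3+1)^{1/2}\,ds=\tfrac45\eta^{5/2}+C+\mathcal{O}(\eta^{-1/2})$ as $\eta\to\infty$ (obtained by writing $2(s^3+1)^{1/2}=2s^{3/2}+s^{-3/2}+\cdots$ and integrating), where the constant $C$ is $\int_{\eta_1}^{\infty}\bigl(2(s^3+1)^{1/2}-2s^{3/2}-\cdots\bigr)ds$ minus the value at $\eta_1$; one then checks, using that $\eta_1\to e^{\pi i/3}$ as $\xi\to+\infty$ and a contour/Beta-function evaluation, that $C\to E_0=\tfrac35\B(\tfrac12,\tfrac13)-\tfrac{\sqrt3 i}{5}\B(\tfrac12,\tfrac13)$ up to $o(1)$, the discrepancy being folded into $\epsilon_1(\xi)$. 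For the $\mathcal{O}(\xi^{-1})$ piece, the integrand $(s^3+1)^{-1/2}(s-A(\xi))^{-1}$ is integrable at $\infty$, so $-\tfrac14\xi^{-1}B(\xi)\int_{\eta_1}^{\eta}(s^3+1)^{-1/2}(s-A(\xi))^{-1}ds\to -\xi^{-1}E_1$ with $E_1$ exactly the integral in \eqref{eq-constants-lemma-1}, after moving the base point from $\eta_1$ to $e^{\pi i/3}$ (another $o(\xi^{-1})$ adjustment into $\epsilon_1$). Collecting, $\tfrac23\zeta^{3/2}=\tfrac45\eta^{5/2}+E_0-E_1/\xi+\epsilon_1(\xi)+\mathcal{O}(\eta^{-1/2})$ with $\epsilon_1(\xi)=o(\xi^{-1})$. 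The relation for $\omega$ is identical with $\eta_1$ replaced by $\eta_2\to e^{-\pi i/3}$, which by the reality/conjugation symmetry $\overline{F(\bar s,\xi)}=F(s,\xi)$ (for real $A,B$) yields the conjugate constants $F_0=\overline{E_0}$ and $F_1=\overline{E_1}$; the sectors of validity $\arg\eta\in(-\pi/5,\pi)$ and $\arg\eta\in(-\pi,\pi/5)$ come from the requirement that the path from the respective turning point to $\eta$ avoid the branch cuts of $(s^3+1)^{1/2}$ and stay in the region where the Airy approximation of Lemma~\ref{lem-uniform-case-I} holds.

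The main obstacle will be controlling the constants \emph{uniformly} in $A(\xi),B(\xi)=\mathcal{O}(1)$ and tracking how the $o(1)$ and $o(\xi^{-1})$ errors arise from moving the base point off the turning point. Concretely: the turning point $\eta_1$ solves $4(\eta_1^3+1)-B/(\xi(\eta_1-A))+\cdots=0$, so $\eta_1=e^{\pi i/3}+\mathcal{O}(\xi^{-1})$, and one must show that replacing $\eta_1$ by $e^{\pi i/3}$ in both the leading integral and the $\xi^{-1}$ integral costs only $o(\xi^{-1})$ — this needs the integrand of the leading part to vanish like $(s-\eta_1)^{1/2}$ near $\eta_1$ so that a base-point shift of size $\xi^{-1}$ gives $\mathcal{O}(\xi^{-3/2})$, and the $\xi^{-1}$ integral's base-point shift is $\xi^{-1}\cdot\mathcal{O}(\xi^{-1})=\mathcal{O}(\xi^{-2})$, both safely $o(\xi^{-1})$. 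One also checks the evaluation of $\int_{e^{\pi i/3}}^{\infty e^{\pi i/3}}(\cdots)$ against $\B(\tfrac12,\tfrac13)$ by deforming to a standard Beta-integral contour. Everything else — the binomial expansion of $F^{1/2}$, term-by-term integration, and the conjugation symmetry for $\omega$ — is routine once these uniformity bookkeeping points are in place.
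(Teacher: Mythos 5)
Your argument is correct and is essentially the paper's route: the paper in fact gives no proof here, deferring to \cite[Lemma 5]{Long-Li-Liu-Zhao}, whose argument is exactly this binomial expansion of $F^{1/2}$ about $2(s^3+1)^{1/2}$, term-by-term integration, identification of the regularized constant with the Beta-function value $E_0$, and the conjugation symmetry for $\omega$. One minor quantitative correction: the base-point shift in the $\xi^{-1}$ term costs $\xi^{-1}\cdot\mathcal{O}(\xi^{-1/2})=\mathcal{O}(\xi^{-3/2})$ rather than $\mathcal{O}(\xi^{-2})$, because $(s^3+1)^{-1/2}$ has a square-root singularity at $e^{\pi i/3}$ --- still $o(\xi^{-1})$, so your conclusion is unaffected.
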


The proof of this lemma is the same as that of \cite[Lemma 5]{Long-Li-Liu-Zhao}
except that the exact values of $E_{1},F_{1}$ in \cite[eqs. (28)-(29)]{Long-Li-Liu-Zhao}
should be replaced with the ones in eq.~\eqref{eq-constants-lemma-1}.

\textbf{Proof of Lemma \ref{lem-case-I}}
Unlike the situation considered in \cite{BCLM},
the Stokes multipliers $s_{k}$'s in this paper depend on $\xi$,
hence we will only take the limit $\eta\to\infty$
when we do the asymptotic matching between the Airy functions and $\Phi_{k}$'s.
According to \cite[eqs. (9.2.12), (9.7.5) and (9.2.10)]{NIST-handbook}, we have
\begin{eqnarray}\label{eq-asym-Ai}
\begin{cases}
\Ai(z)\sim
\frac{1}{2\sqrt{\pi}}z^{-\frac{1}{4}}e^{-\frac{2}{3}z^{\frac{3}{2}}},
\quad & \arg{z}\in(-\pi,\pi),\\
\Ai(z)\sim
\frac{1}{2\sqrt{\pi}}z^{-\frac{1}{4}}e^{-\frac{2}{3}z^{\frac{3}{2}}}
+\frac{i}{2\sqrt{\pi}}z^{-\frac{1}{4}}e^{\frac{2}{3}z^{\frac{3}{2}}},
\quad & \arg{z}\in\left(\frac{\pi}{3},\frac{5\pi}{3}\right),\\
\Ai(z)\sim
\frac{1}{2\sqrt{\pi}}z^{-\frac{1}{4}}e^{-\frac{2}{3}z^{\frac{3}{2}}}
-\frac{i}{2\sqrt{\pi}}z^{-\frac{1}{4}}e^{\frac{2}{3}z^{\frac{3}{2}}},
\quad &\arg{z}\in\left(-\frac{5\pi}{3},-\frac{\pi}{3}\right)
\end{cases}
\end{eqnarray}
and
\begin{eqnarray}\label{eq-asym-Bi}
\begin{cases}
\Bi(z)\sim
\frac{i}{2\sqrt{\pi}}z^{-\frac{1}{4}}e^{-\frac{2}{3}z^{\frac{3}{2}}}
+\frac{1}{\sqrt{\pi}}z^{-\frac{1}{4}}e^{\frac{2}{3}z^{\frac{3}{2}}},
\quad&\arg{z}\in\left(-\frac{\pi}{3},\pi\right),\\
\Bi(z)\sim
\frac{i}{2\sqrt{\pi}}z^{-\frac{1}{4}}e^{-\frac{2}{3}z^{\frac{3}{2}}}
+\frac{1}{2\sqrt{\pi}}z^{-\frac{1}{4}}e^{\frac{2}{3}z^{\frac{3}{2}}},
\quad & \arg{z}\in\left(\frac{\pi}{3},\frac{5\pi}{3}\right),\\
\Bi(z)\sim
-\frac{i}{2\sqrt{\pi}}z^{-\frac{1}{4}}e^{-\frac{2}{3}z^{\frac{3}{2}}}
+\frac{1}{\sqrt{\pi}}z^{-\frac{1}{4}}e^{\frac{2}{3}z^{\frac{3}{2}}},
\quad & \arg{z}\in\left(-\pi,\frac{\pi}{3}\right)
\end{cases}
\end{eqnarray}

When $\eta\to\infty$ with $\arg\eta\sim\frac{\pi}{5}$, it follows from \eqref{zeta-eta-infty-relation} that $\arg\zeta\sim\frac{\pi}{3}$. Substituting \eqref{zeta-eta-infty-relation} into \eqref{eq-asym-Ai} and \eqref{eq-asym-Bi},
and noting that $\lambda=\xi^{\frac{2}{5}}\eta$ from \eqref{eq:transform1}, we get
\begin{eqnarray}\label{eq-Ai-Bi-pi/5}
\begin{cases}
\sqrt{2(\lambda-a)}\left(\frac{\zeta}{F(\eta,\xi)}\right)^{\frac{1}{4}}\Ai(\xi^{\frac{2}{3}}\zeta)\sim c_{1}\frac{-1}{\sqrt{2}}\lambda^{-\frac{1}{4}}e^{-\frac{4}{5}\lambda^{\frac{5}{2}}},\\
\sqrt{2(\lambda-a)}\left(\frac{\zeta}{F(\eta,\xi)}\right)^{\frac{1}{4}}\Bi(\xi^{\frac{2}{3}}\zeta)\sim ic_{1}\frac{-1}{\sqrt{2}}\lambda^{-\frac{1}{4}}e^{-\frac{4}{5}\lambda^{\frac{5}{2}}}+2c_{2}\frac{1}{\sqrt{2}}\lambda^{-\frac{1}{4}}e^{\frac{4}{5}\lambda^{\frac{5}{2}}}
\end{cases}
\end{eqnarray}
as $\lambda\to\infty$ (and $\eta\to\infty$ accordingly), where
\begin{equation}
\begin{split}
c_{1}&=-\frac{1}{\sqrt{2\pi}}\xi^{\frac{2}{15}}e^{-\xi E_{0}+E_{1}+\epsilon_{1}(\xi)},\\
c_{2}&=\frac{1}{\sqrt{2\pi}}\xi^{\frac{2}{15}}e^{\xi E_{0}-E_{1}-\epsilon_{1}(\xi)}.
\end{split}
\end{equation}
From \eqref{eq-canonical-solutions}, a straightforward calculation yields
\begin{equation}\label{eq-asym-Phi21-Phi22}
\left((\Phi_{k})_{21},(\Phi_{k})_{22}\right)\sim \left(\frac{1}{\sqrt{2}}\lambda^{-\frac{1}{4}}e^{\frac{4}{5}\lambda^{\frac{5}{2}}},\,
\frac{-1}{\sqrt{2}}\lambda^{-\frac{1}{4}}e^{-\frac{4}{5}\lambda^{\frac{5}{2}}}\right),
\qquad k\in\mathbb{Z}
\end{equation}
as $\lambda\to\infty$.
Substituting \eqref{eq-Ai-Bi-pi/5} into \eqref{eq-asym-Y-zeta}
and then comparing the resulting equation with \eqref{eq-asym-Phi21-Phi22},
we get
\begin{equation}\label{eq-Y-pi/5}
\begin{split}
&\sqrt{2(\lambda-a)Y}\\
=&[C_{1}+r_{1,1}(\xi)]c_{1}(\Phi_{1})_{22}+[C_{2}+r_{2,1}(\xi)][ic_{1}(\Phi_{1})_{22}+2c_{2}(\Phi_{1})_{21}]\\
=&[C_{2}+r_{2,1}(\xi)]2c_{2}(\Phi_{1})_{21}+\{[C_{1}+r_{1,1}(\xi)]c_{1}+[C_{2}+r_{2,1}(\xi)]ic_{1}\}(\Phi_{1})_{22}
\end{split}
\end{equation}
with $\lambda\in\Omega_{1}$.

When $\eta\to\infty$ with $\arg\eta\sim\frac{3\pi}{5}$,
one has $\arg{\zeta}\sim\pi$. Using the asymptotic behavior of the Airy functions with $\arg{z}\sim\pi$ in \eqref{eq-asym-Ai} and \eqref{eq-asym-Bi}, we have
\begin{eqnarray}\label{eq-Ai-Bi-3pi/5}
\begin{cases}
\sqrt{2(\lambda-a)}\left(\frac{\zeta}{F(\eta,\xi)}\right)^{\frac{1}{4}}\Ai(\xi^{\frac{2}{3}}\zeta)\sim c_{1}\frac{-1}{\sqrt{2}}\lambda^{-\frac{1}{4}}e^{-\frac{4}{5}\lambda^{\frac{5}{2}}}+ic_{2}\frac{1}{\sqrt{2}}\lambda^{-\frac{1}{4}}e^{\frac{4}{5}\lambda^{\frac{5}{2}}},\\
\sqrt{2(\lambda-a)}\left(\frac{\zeta}{F(\eta,\xi)}\right)^{\frac{1}{4}}\Bi(\xi^{\frac{2}{3}}\zeta)\sim ic_{1}\frac{-1}{\sqrt{2}}\lambda^{-\frac{1}{4}}e^{-\frac{4}{5}\lambda^{\frac{5}{2}}}+c_{2}\frac{1}{\sqrt{2}}\lambda^{-\frac{1}{4}}e^{\frac{4}{5}\lambda^{\frac{5}{2}}}
\end{cases}
\end{eqnarray}
as $\lambda\to\infty$ ($\eta\to\infty$).
Substituting \eqref{eq-Ai-Bi-3pi/5} into \eqref{eq-asym-Y-zeta}
and comparing the resulting equation with \eqref{eq-asym-Phi21-Phi22}, we get
\begin{equation}\label{eq-Y-3pi/5-C}
\begin{split}
\sqrt{2(\lambda-a)}Y=&\{[C_{1}+r_{1,2}(\xi)]ic_{2}+[C_{2}+r_{2,2}(\xi)]c_{2}\}(\Phi_{2})_{21}\\
&+\{[C_{1}+r_{1,2}(\xi)]c_{1}+[C_{2}+r_{2,2}(\xi)]ic_{1}\}(\Phi_{2})_{22}
\end{split}
\end{equation}
$\lambda\in\Omega_{2}$.
Combining \eqref{eq-Y-pi/5} with \eqref{eq-Y-3pi/5-C}, and noting that
$$\left((\Phi_{2})_{21},(\Phi_{2})_{22}\right)=\left((\Phi_{2})_{21},(\Phi_{2})_{22}\right)\left(\begin{matrix}1&s_{1}\\0&1\end{matrix}\right),$$
we further obtain
\begin{equation}
r_{1,1}(\xi)-r_{1,2}(\xi)=i(r_{2,2}(\xi)-r_{2,1}(\xi))
\end{equation}
and
\begin{equation}
\begin{split}
s_{1}&=\frac{(C_{2}-iC_{1})c_{2}+(2r_{2,1}(\xi)-ir_{1,2}(\xi)-r_{2,2}(\xi))c_{2}}{(C_{1}+iC_{2})c_{1}+r_{1,2}(\xi)c_{1}+r_{2,2}(\xi)ic_{1}}\\
&=-\frac{ic_{2}}{c_{1}}+\frac{2(r_{2,1}(\xi)-r_{2,2}(\xi))c_{2}}{(C_{1}+iC_{2}+r_{1,2}(\xi)+ir_{2,2}(\xi))c_{1}}.
\end{split}
\end{equation}
Since $r_{i,k}(\xi)=o(|C_{1}|+|C_{2}|)$ as $\xi\to+\infty$ for all $i=1,2$ and $k=1,2$. Hence
\begin{equation}
s_{1}=i e^{2\xi E_{0}-2E_{1}-2\epsilon_{1}(\xi)}\left(1+o(1)\right)=i e^{2\xi E_{0}-2E_{1}+o(1)}
\end{equation}
as $\xi\to+\infty$.

Note that $s_{-1}=\overline{s_{1}}$ and $F_{j}=\overline{E_{j}},$ $j=0,1$.
Hence, by taking the complex conjugate of the leading asymptotic behaviour of $s_{1}$
we obtain the one of $s_{-1}$ in \eqref{asym-Stokes-multipliers-case-I}.

\subsection{Case II: $\frac{B(\xi)^2}{4}-A(\xi)^3=-1$}
In this case, equation \eqref{Schrodinger-equation-scaled} becomes
\begin{equation}\label{Schrodinger-equation-scaled-II}
\begin{split}
\frac{d^{2}Y}{d\eta^{2}}=&\xi^2\left[4(\eta^{3}-1)-\frac{B(\xi)}{\xi(\eta-A(\xi))}+\frac{3}{4}\frac{1}{\xi^2(\eta-A(\xi))^2}\right]Y\\
=&\xi^2\tilde{Q}(\eta,\xi)Y,
\end{split}
\end{equation}
which is a Shr\"{o}dinger equation with three simple turning points,
denoted by $\tilde{\eta}_{j},j=1,2,3$, near $\eta=e^{2\pi i/3},e^{-2\pi i/3}$ and $1$ respectively; see Figure \ref{Figure-stokes-II}. According to \cite{EGS-2008}, the limiting state of the Stokes geometry of the quadratic form $Q(\eta,\xi)d\eta^2$ as $\xi\rightarrow+\infty$ is described in Figure \ref{Figure-stokes-II}, where the Stokes curves are defined by
\[
\re\left(\int_{\tilde{\eta}_{j}}^{\eta}\sqrt{\tilde{Q}(s,\xi)}ds\right)=0, \qquad j=1,2,3.
\]

Note that there is a pole $P(A(\xi),0)$ of $\tilde{Q}$ in the sector between $\gamma_{0}$ and $\gamma_{1}$.
Moreover, we find that when $A(\xi)>1$, the pole $P$ is away from $\eta_{0}$;
see the left subplot of Figure \ref{Figure-stokes-II}.
In this case, one may use the Airy function to approximate the solutions of \eqref{Schrodinger-equation-scaled-II} uniformly in the neighborhood of $\gamma_{0}$ and $\gamma_{1}$. When $A(\xi)\to 1$ as $\xi\to+\infty$, the pole $P$ coalesces to the turning point $\eta_{0}$;
see the right subplot in Figure \ref{Figure-stokes-II}.
In this case, according to \cite{Dunster-1990} or \cite{Long-Li-Liu-Zhao},
the Bessel functions are involved.
Although, in the above two cases, appropriate special functions can be chosen to approximate the solutions of \eqref{Schrodinger-equation-scaled-II} uniformly in the neighborhood of $\gamma_{0}$ and $\gamma_{1}$,
we prefer to analyze for the two cases in a unified way since we intend to calculate the leading asymptotic behavior of the Stokes multiplier $s_{0}$ in a unified form for all $A(\xi)\geq 1$ with $\frac{B(\xi)^2}{4}-A(\xi)^3=-1$. According to the method of uniform asymptotics,
we should approximate the solutions of \eqref{Schrodinger-equation-scaled-II} uniformly in the neighborhood of $\gamma_{0}$ and $\gamma_{1}$,
and the approximation should hold in both cases,
which seems difficult yet.
Fortunately, recalling that $s_{0}=i(1+s_{2}s_{3})$ and $s_{2}=s_{-3}=\overline{s_{3}}$
from \eqref{eq-constraints-stokes-multipliers} and \eqref{eq-sk-s-k-relation},
we only need to deal with $s_{2}$.
Hence, it suffices to derive the uniform asymptotics of \eqref{Schrodinger-equation-scaled-II}
in the neighborhoods of the Stokes curves emanating from $\eta_{1}$.

\begin{figure}[!ht]
\begin{minipage}{0.48\textwidth}
  \centering\includegraphics[width=5cm, scale=1]{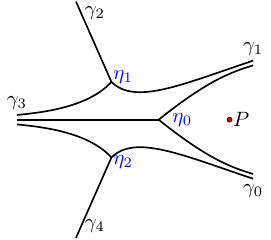}
\end{minipage}
\begin{minipage}{0.48\textwidth}
  \centering\includegraphics[width=5cm, scale=1]{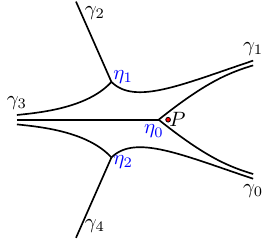}
\end{minipage}
  \caption{Stokes geometry of $\tilde{Q}(\eta,\xi)d\eta^{2}$.}\label{Figure-stokes-II}
\end{figure}

Define the conformal mapping $\theta(\eta)$ by
\begin{equation}\label{zeta-define}
\int_{0}^{\theta}s^{\frac{1}{2}}ds=\int_{\eta_{1}}^{\eta}\tilde{Q}(s,\xi)^{\frac{1}{2}}ds,
\end{equation}
which maps the neighborhood of $\eta=\eta_{1}$ to a neighborhood of the origin.
Then the conformality can be extended to the Stokes curves emanating from $\eta_{1}$
and tending to infinity,
and the following lemma is a consequence of \cite[Theorem 2]{BCLM}.
\begin{lemma}\label{lem-uniform-case-II}
Let $Y$ be any solution of \eqref{Schrodinger-equation-scaled-II}, then there are constants $D_{1}, D_{2}$ such that
\begin{equation}\label{eq-Y-Ai-Bi-theta}
Y=\left(\frac{\theta}{\tilde{Q}(\eta,\xi)}\right)^{\frac{1}{4}}\left\{\left[D_{1}+\bar{r}_{1}(\eta,\xi)\right]\Ai(\xi^{\frac{2}{3}}\theta)+\left[D_{2}+\bar{r}_{1}(\eta,\xi)\right]\Bi(\xi^{\frac{2}{3}}\theta)\right\},
\end{equation}
where $\bar{r}_{1}(\eta, \xi),\bar{r}_{2}(\eta,\xi)=o(|D_{1}|+|D_{2}|)$ as $\xi\to+\infty$ uniformly for $\eta$ on any two adjacent Stokes lines emanating from $\eta_{1}$.
\end{lemma}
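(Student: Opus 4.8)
The plan is to obtain Lemma~\ref{lem-uniform-case-II} from \cite[Theorem~2]{APC} in exactly the same way that Lemma~\ref{lem-uniform-case-I} was obtained from it, the only additional point being to keep track of uniformity with respect to the moving data $A(\xi),B(\xi)$ subject to $\frac{B(\xi)^{2}}{4}-A(\xi)^{3}=-1$. First I would record, using the estimate for $\tilde\varphi$ stated just after \eqref{Schrodinger-equation-scaled} together with \eqref{Schrodinger-equation-scaled-II}, that $\tilde Q(\eta,\xi)=4(\eta^{3}-1)+\mathcal{O}(\xi^{-1})$ and $\partial_{\eta}\tilde Q(\eta,\xi)=12\eta^{2}+\mathcal{O}(\xi^{-1})$, uniformly for $\eta$ in any fixed compact set at a fixed positive distance from the pole $P=(A(\xi),0)$ and uniformly for all admissible $A(\xi),B(\xi)$. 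Hence the turning points of $\tilde Q(\cdot,\xi)$ are $\mathcal{O}(\xi^{-1})$-perturbations of $e^{2\pi i/3},e^{-2\pi i/3},1$, and in particular $\eta_{1}=e^{2\pi i/3}+\mathcal{O}(\xi^{-1})$ is a \emph{simple} turning point, since $\partial_{\eta}\tilde Q(\eta_{1},\xi)=12e^{4\pi i/3}+\mathcal{O}(\xi^{-1})\neq0$ for $\xi$ large.

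Next I would set up the conformal change of variable $\theta(\eta)$ defined by $\frac{2}{3}\theta^{3/2}=\int_{\eta_{1}}^{\eta}\tilde Q(s,\xi)^{1/2}\,ds$ with the principal branches. Near $\eta_{1}$ it is analytic and univalent with $\theta'(\eta_{1})\neq0$, and by the standard continuation argument it extends conformally along the two Stokes curves $\re\big(\int_{\eta_{1}}^{\eta}\tilde Q(s,\xi)^{1/2}\,ds\big)=0$ emanating from $\eta_{1}$ and running to infinity with $\arg\lambda\to 3\pi/5$ and $\arg\lambda\to\pi$: along these curves $\tilde Q(\eta,\xi)$ stays bounded away from $0$ and from $\infty$, because its remaining zeros (the other two turning points) and its only pole $P$ all stay at a fixed distance from these two curves for every $A(\xi)\ge1$. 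This is precisely why the choice of $\eta_{1}$ rather than the turning point near $\eta=1$ is convenient, and why no ``away from $P$'' restriction appears in the statement, in contrast with Lemma~\ref{lem-uniform-case-I}: the two relevant Stokes curves lie in the upper-left part of the $\eta$-plane and never approach $P\in[1,\infty)$.

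With $\eta_{1}$ a simple turning point and $\theta$ conformal along the two adjacent Stokes curves, the hypotheses of \cite[Theorem~2]{APC} reduce to the finiteness and $\xi\to+\infty$ decay of the associated error-control function, i.e. the total variation along those Stokes curves of the standard combination of $\tilde Q^{-1/4}$, its $\eta$-derivatives and $\theta$. This is the key estimate, and I would bound it by $\mathcal{O}(\xi^{-1})$: away from $\eta_{1}$ and $\infty$ the integrand is $\mathcal{O}(\xi^{-1})$ by the bound $\tilde\varphi(\eta,\xi)=\mathcal{O}(\xi^{-1})$ of \eqref{Schrodinger-equation-scaled} (whose $\eta$-derivatives are likewise $\mathcal{O}(\xi^{-1})$ since $\eta$ stays a bounded distance from $P$); convergence near $\eta_{1}$ is automatic because the turning point is simple; and convergence near infinity follows from $\tilde Q(\eta,\xi)=4\eta^{3}(1+o(1))$. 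Then \cite[Theorem~2]{APC} yields the representation \eqref{eq-Y-Ai-Bi-theta} with $\bar r_{1}(\eta,\xi),\bar r_{2}(\eta,\xi)=o(|D_{1}|+|D_{2}|)$ as $\xi\to+\infty$, uniformly for $\eta$ on any two adjacent Stokes curves from $\eta_{1}$ and uniformly for all $A(\xi),B(\xi)$ with $\frac{B(\xi)^{2}}{4}-A(\xi)^{3}=-1$.

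The main obstacle I anticipate is not the existence of the Airy-type representation --- that is \cite[Theorem~2]{APC} verbatim --- but securing all of the above \emph{uniformly} in the family $\{(A(\xi),B(\xi))\}$, since $\xi$ here plays the double role of the large WKB parameter and of a parameter in the potential (through $\tilde\varphi$, through the drifting turning points, and through the pole $P$). Concretely one must rule out, uniformly for all permissible $A(\xi)\ge1$, any collision between $\eta_{1}$ and either the other turning points or $P$ along the two relevant Stokes curves; the delicate sub-case $A(\xi)\to1$, in which $P$ collides with the turning point near $\eta=1$ so that Bessel functions govern a neighbourhood of that point (cf. \cite{Dunster-1990}), is harmless here because it affects only the Stokes curves $\gamma_{0},\gamma_{1}$ and that turning point, none of which enters the analysis based at $\eta_{1}$ that is needed to extract $s_{2}$.
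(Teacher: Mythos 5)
Your proposal follows the same route as the paper, which states Lemma~\ref{lem-uniform-case-II} as a direct consequence of \cite[Theorem 2]{APC} after defining the conformal map $\theta$ and extending its conformality along the Stokes curves emanating from $\eta_{1}$; you simply supply the hypothesis-checking (simplicity of $\eta_{1}$, the $\mathcal{O}(\xi^{-1})$ error-control bound, and uniformity in $A(\xi),B(\xi)$) that the paper leaves implicit. Your observation that the ``away from $P$'' caveat of Lemma~\ref{lem-uniform-case-I} is unnecessary here because the two Stokes curves based at $\eta_{1}$ stay clear of the pole on $[1,\infty)$ is exactly the point the paper makes in motivating the restriction to $s_{2}$.
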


\begin{remark}
It is obvious that $\bar{r}_{i}(\eta,\xi),\, i=1,2$,
have properties similar to those of $r_{i}(\eta,\xi)$ mentioned in Remark \ref{remark-r-tilde-r},
and thus we set $\bar{r}_{i,k}(\xi)=\lim\limits_{\eta\to\infty}\bar{r}_{i}(\eta,\xi)$ with $\arg\eta\sim\frac{(2k-1)\pi}{5}$, $k=2,3$.
\end{remark}

Similar to Lemma \ref{lemma-zeta-omega-eta-infty-relation}, we can also obtain the asymptotic behavior of $\theta(\eta)$ as $|\eta|\to\infty$.
\begin{lemma}\label{lemma-theta-eta-infty-relation}
The conformal mapping $\theta(\eta)$ satisfies
\begin{equation}\label{theta-eta-infty-relation}
\frac{2}{3}\theta^{\frac{3}{2}}=\frac{4}{5}\eta^{\frac{5}{2}}+G_{0}-\frac{G_{1}}{\xi}+\epsilon_{3}(\xi)+\mathcal{O}(\eta^{-\frac{1}{2}}),
\qquad \arg\eta\in \left(-\frac{\pi}{5},\pi\right),
\end{equation}
as $|\eta|\rightarrow\infty$, where $\epsilon_{3}(\xi)=o(\xi^{-1})$ as $\xi\to+\infty$ and $G_{0}=\frac{2e^{\frac{2}{3}\pi i}}{5}\B\left(\frac{1}{2},\frac{1}{3}\right)$.
\end{lemma}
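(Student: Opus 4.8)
The plan is to read off the large-$\eta$ behaviour directly from the defining relation $\frac{2}{3}\theta^{3/2}=\int_{\eta_{1}}^{\eta}\tilde{Q}(s,\xi)^{1/2}\,ds$ (with the principal branches), following verbatim the argument used for Lemma~\ref{lemma-zeta-omega-eta-infty-relation}, i.e.\ the proof of \cite[Lemma 5]{Long-Li-Liu-Zhao}. First I would expand the integrand for $s$ bounded away from the pole $\eta=A(\xi)$,
\begin{equation*}
\tilde{Q}(s,\xi)^{\frac{1}{2}}
=2(s^{3}-1)^{\frac{1}{2}}
-\frac{B(\xi)}{4\xi\,(s^{3}-1)^{\frac{1}{2}}(s-A(\xi))}
+R(s,\xi),
\end{equation*}
where $R(s,\xi)=\mathcal{O}(\xi^{-2})$ uniformly on the Stokes curves issuing from $\eta_{1}$ and $R(s,\xi)=\mathcal{O}(\xi^{-2}|s|^{-5/2})$ as $s\to\infty$; this decomposes $\frac{2}{3}\theta^{3/2}$ into three integrals, to be handled one by one.

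For the leading integral I would write $\int_{\eta_{1}}^{\eta}2(s^{3}-1)^{1/2}\,ds=\frac{4}{5}\eta^{5/2}+\bigl[\int_{\eta_{1}}^{\eta}2(s^{3}-1)^{1/2}\,ds-\frac{4}{5}\eta^{5/2}\bigr]$ and note that $2(s^{3}-1)^{1/2}-2s^{3/2}=\mathcal{O}(|s|^{-3/2})$ along the rays in $\arg\eta\in(-\pi/5,\pi)$, so the bracket converges as $\eta\to\infty$ to a constant with error $\mathcal{O}(\eta^{-1/2})$. Since the turning point obeys $\eta_{1}=e^{2\pi i/3}+\mathcal{O}(\xi^{-1})$ and the integrand vanishes like $(s-\eta_{1})^{1/2}$ there, moving the base point from $\eta_{1}$ to $e^{2\pi i/3}$ costs only $o(\xi^{-1})$, which goes into $\epsilon_{3}(\xi)$; thus the constant equals $\lim_{T\to\infty}\bigl[\int_{e^{2\pi i/3}}^{T}2(s^{3}-1)^{1/2}\,ds-\frac{4}{5}T^{5/2}\bigr]$, which after a standard substitution becomes a Beta integral and evaluates to $G_{0}=\frac{2}{5}e^{2\pi i/3}\B\bigl(\frac{1}{2},\frac{1}{3}\bigr)$, consistent with Lemma~\ref{lem-case-II}.

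For the second integral, the integrand $B(\xi)(s^{3}-1)^{-1/2}(s-A(\xi))^{-1}$ is $\mathcal{O}(|s|^{-5/2})$, so $-\frac{1}{4\xi}\int_{\eta_{1}}^{\eta}$ converges as $\eta\to\infty$ with error $\mathcal{O}(\eta^{-1/2})$. I would then deform the path to the ray $\arg s=2\pi/3$, which lies in the admissible sector and stays away from the pole at $s=A(\xi)\ge 1$ and the branch point at $s=1$ on the positive real axis, so the integrand remains analytic along the deformation; replacing $\eta_{1}$ by $e^{2\pi i/3}$ again costs $o(\xi^{-1})$. This shows the second integral equals $-G_{1}/\xi$ with $G_{1}$ exactly the integral in Lemma~\ref{lem-case-II}. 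The third integral $\int_{\eta_{1}}^{\eta}R(s,\xi)\,ds$ contributes $o(\xi^{-1})+\mathcal{O}(\eta^{-1/2})$, also absorbed into $\epsilon_{3}(\xi)$. Adding the three pieces and fixing the branch of $\theta^{3/2}$ compatible with the conformality at $\eta_{1}$ and its continuation along the two adjacent Stokes lines (checking $\arg\theta$ as in the Case~I analysis) yields \eqref{theta-eta-infty-relation}.

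The point most in need of care — and the reason the paper anchors everything at $\eta_{1}\approx e^{2\pi i/3}$ and works with $s_{2}$ rather than near $s=1$ — is uniformity in $A(\xi)$ as $A(\xi)\to 1^{+}$, when the pole $P$ at $s=A(\xi)$ coalesces with the turning point near $s=1$ and $(s^{3}-1)^{-1/2}(s-A(\xi))^{-1}$ develops a non-integrable singularity on the positive real axis. However, every contour used above (the Stokes curves from $\eta_{1}$ and the deformed ray $\arg s=2\pi/3$) stays uniformly bounded away from the positive real axis, so $|s-A(\xi)|$ is bounded below there and the coalescence never enters the estimates; hence the expansion holds uniformly for all $A(\xi)\ge 1$ with $\frac{B(\xi)^{2}}{4}-A(\xi)^{3}=-1$. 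I expect this uniformity check, rather than the routine Beta-function evaluation of $G_{0}$ or the contour deformations, to be the only genuine obstacle, and it is resolved exactly as in \cite{Long-Li-Liu-Zhao}.
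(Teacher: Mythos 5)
Your proposal is correct and follows essentially the same route the paper intends: the paper gives no independent proof of this lemma, deferring to the analogous Lemma~\ref{lemma-zeta-omega-eta-infty-relation}, whose proof is in turn that of \cite[Lemma 5]{Long-Li-Liu-Zhao} — namely the expansion of $\tilde{Q}^{1/2}$ in powers of $\xi^{-1}$, term-by-term integration with the Beta-function evaluation of $G_{0}$ and identification of the $\xi^{-1}$ coefficient with $G_{1}$, exactly as you do. Your closing observation that anchoring the contours at $\eta_{1}\approx e^{2\pi i/3}$ keeps them uniformly away from the coalescing pole $s=A(\xi)$ near $s=1$ is precisely the reason the paper restricts to $s_{2}$ in Case II, so the uniformity claim is handled as the paper requires.
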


\begin{proof}
The argument for deriving the asymptotics of the Stokes multiplier $s_{2}$
is similar to that for $s_{1}$ in Case I $\frac{B(\xi)^2}{4}-A(\xi)^3=1$.
The only difference is that we should apply the uniform asymptotics
of the Airy functions $\Ai(z)$ and $\Bi(z)$ as $z\to\infty$
with $\arg{z}\sim\pi$ and $\arg{z}\sim\frac{5\pi}{3}$.

When $\eta\to\infty$ with $\arg\eta\sim\frac{3\pi}{5}$, one has $\arg{\theta}\sim\pi$.
Substituting \eqref{theta-eta-infty-relation} into the asymptotics of the Airy functions with $\arg{z}\sim\pi$ in \eqref{eq-asym-Ai} and \eqref{eq-asym-Bi},
and recalling the transformation $\lambda=\xi^{\frac{2}{3}}\eta$ from \eqref{eq:transform1}, we have
\begin{eqnarray}\label{eq-Ai-Bi-3pi/5-theta}
\begin{cases}
\sqrt{2(\lambda-a)}\left(\frac{\theta}{F(\eta,\xi)}\right)^{\frac{1}{4}}\Ai(\xi^{\frac{2}{3}}\theta)\sim d_{1}\frac{-1}{\sqrt{2}}\lambda^{-\frac{1}{4}}e^{-\frac{4}{5}\lambda^{\frac{5}{2}}}+id_{2}\frac{1}{\sqrt{2}}\lambda^{-\frac{1}{4}}e^{\frac{4}{5}\lambda^{\frac{5}{2}}},\\
\sqrt{2(\lambda-a)}\left(\frac{\theta}{F(\eta,\xi)}\right)^{\frac{1}{4}}\Bi(\xi^{\frac{2}{3}}\theta)\sim id_{1}\frac{-1}{\sqrt{2}}\lambda^{-\frac{1}{4}}e^{-\frac{4}{5}\lambda^{\frac{5}{2}}}+d_{2}\frac{1}{\sqrt{2}}\lambda^{-\frac{1}{4}}e^{\frac{4}{5}\lambda^{\frac{5}{2}}}
\end{cases}
\end{eqnarray}
as $\lambda\to\infty$ (and $\eta\to\infty$ accordingly) with $\lambda\sim\frac{3\pi}{5}$, where
\begin{equation}
\begin{split}
d_{1}&=-\frac{1}{\sqrt{2\pi}}\xi^{\frac{2}{15}}e^{-\xi G_{0}+G_{1}+\epsilon_{3}(\xi)},\\
d_{2}&=\frac{1}{\sqrt{2\pi}}\xi^{\frac{2}{15}}e^{\xi G_{0}-G_{1}-\epsilon_{3}(\xi)}.
\end{split}
\end{equation}
Combining \eqref{eq-asym-Phi21-Phi22}, \eqref{eq-Y-Ai-Bi-theta} and \eqref{eq-Ai-Bi-3pi/5-theta}, we get
\begin{equation}\label{eq-Y-3pi/5-D}
\begin{split}
\sqrt{2(\lambda-a)}Y=&\{[D_{1}+\bar{r}_{1,2}(\xi)]id_{2}+[D_{2}+\bar{r}_{2,2}(\xi)]d_{2}\}(\Phi_{2})_{21}\\
&+\{[D_{1}+\bar{r}_{1,2}(\xi)]d_{1}+[D_{2}+\bar{r}_{2,2}(\xi)]id_{1}\}(\Phi_{2})_{22}
\end{split}
\end{equation}
with $\lambda\in\Omega_{2}$.

When $\eta\to\infty$ with $\arg{\eta}\sim\pi$, we have $\arg{\theta}\sim\frac{3\pi}{5}$.
Noting that $\Ai(z)=\Ai(ze^{-2\pi i})$ and $\Bi(z)=\Bi(ze^{-2\pi i})$,
it follows from \eqref{eq-asym-Ai} and \eqref{eq-asym-Bi} that
\begin{equation}\label{eq-asym-Ai-Bi-5pi/3}
\Ai(z)\sim\frac{i}{2\sqrt{\pi}}z^{-\frac{1}{4}}e^{\frac{2}{3}z^{\frac{3}{2}}},\quad \Bi(z)\sim\frac{i}{\sqrt{\pi}}z^{-\frac{1}{4}}e^{-\frac{2}{3}z^{\frac{3}{2}}}+\frac{1}{2\sqrt{\pi}}z^{-\frac{1}{4}}e^{\frac{2}{3}z^{\frac{3}{2}}}
\end{equation}
as $z\to\infty$ with $\arg{z}\sim\frac{5\pi}{3}$.
Substituting \eqref{theta-eta-infty-relation} into \eqref{eq-asym-Ai-Bi-5pi/3}, we have
\begin{eqnarray}\label{eq-Ai-Bi-pi-theta}
\begin{cases}
\sqrt{2(\lambda-a)}\left(\frac{\theta}{F(\eta,\xi)}\right)^{\frac{1}{4}}\Ai(\xi^{\frac{2}{3}}\theta)\sim id_{2}\frac{-1}{\sqrt{2}}\lambda^{-\frac{1}{4}}e^{\frac{4}{5}\lambda^{\frac{5}{2}}},\\
\sqrt{2(\lambda-a)}\left(\frac{\theta}{F(\eta,\xi)}\right)^{\frac{1}{4}}\Bi(\xi^{\frac{2}{3}}\theta)\sim 2id_{1}\frac{-1}{\sqrt{2}}\lambda^{-\frac{1}{4}}e^{-\frac{4}{5}\lambda^{\frac{5}{2}}}+d_{2}\frac{1}{\sqrt{2}}\lambda^{-\frac{1}{4}}e^{\frac{4}{5}\lambda^{\frac{5}{2}}}
\end{cases}
\end{eqnarray}
as $\lambda\to\infty$ ($\eta\to\infty$) with $\lambda\sim\pi$.
A combination of \eqref{eq-asym-Phi21-Phi22}, \eqref{eq-Y-Ai-Bi-theta} and \eqref{eq-Ai-Bi-pi-theta},
yields
\begin{equation}
i(\bar{r}_{1,2}-\bar{r}_{1,3})+(\bar{r}_{2,2}-\bar{r}_{2,3})=0
\end{equation}
and
\begin{equation}\label{eq-Y-pi-D}
\begin{split}
\sqrt{2(\lambda-a)}Y=&\{[D_{1}+\bar{r}_{1,3}(\xi)]id_{2}+[D_{2}+\bar{r}_{2,3}(\xi)]d_{2}\}(\Phi_{3})_{21}\\
&+[D_{2}+\bar{r}_{2,3}(\xi)]2id_{1}(\Phi_{3})_{22}
\end{split}
\end{equation}
with $\lambda\in\Omega_{3}$. Combining \eqref{eq-Y-3pi/5-D} with \eqref{eq-Y-pi-D} and noting that
$$\left((\Phi_{3})_{21},(\Phi_{3})_{22}\right)=\left((\Phi_{2})_{21},(\Phi_{2})_{22}\right)\left(\begin{matrix}1&0\\s_{2}&1\end{matrix}\right),$$
we have
\begin{equation}
\begin{split}
s_{2}
=&\frac{[D_{1}+\bar{r}_{1,2}(\xi)]d_{1}+[D_{2}+\bar{r}_{2,2}(\xi)]id_{1}-[D_{2}+\bar{r}_{2,3}(\xi)]2id_{1}}{[D_{1}+\bar{r}_{1,3}(\xi)]id_{2}+[D_{2}+\bar{r}_{2,3}(\xi)]d_{2}}\\
=&\frac{d_{1}}{id_{2}}+\frac{d_{1}}{id_{2}}\frac{2i(\bar{r}_{2,2}(\xi)-\bar{r}_{2,3}(\xi))}{D_{1}-iD_{2}+\bar{r}_{1,3}(\xi)-i\bar{r}_{2,3}(\xi)}.
\end{split}
\end{equation}
In view of the fact that $\bar{r}_{i,k}(\xi)=o(|D_{1}|+|D_{2}|),i=1,2,k=2,3$ as $\xi\to+\infty$,
we immediately get the leading asymptotic behavior of $s_{2}$ in \eqref{asym-Stokes-multipliers-case-II}.
Note that $s_{3}=-\overline{s_{2}}$,
the leading asymptotic behavior of $s_{3}$ is obtained by taking the complex conjugate
of the one of $s_{2}$.
\end{proof}

\section{Discussion}
In the present paper, we study the connection problem of the PI equation between the initial data and the large negative $t$ asymptotic behaviors.
We find how the solutions evolve when we let both $y(0)=a$ and $y'(0)=b$ vary.
Precisely speaking,
we show that there exists a sequence of curves $\Gamma_{n}$, $n=1,2,\dots$, on the $(a,b)$-plane,
corresponding to the real tronqu\'{e}e solutions, \textit{i.e.}, separatrix solutions.

When $(a,b)$ lies between two curves $\Gamma_{2m}$ and $\Gamma_{2m+1}$, $m=1,2,\cdots$,
the solution $y(t;a,b)$ oscillates around $y(t)=-\sqrt{-t/6}$;
and when $(a,b)$ lies between $\Gamma_{2m-1}$ and $\Gamma_{2m}$, $m=1,2,\cdots$,
the solution $y(t;a,b)$ has infinite number of poles on the negative real axis.

One may compare our results with the ones in a previous work \cite{Long-Li-Liu-Zhao}.
Note that, in \cite{Long-Li-Liu-Zhao},
the authors assume that one of the initial data $a$ and $b$ is fixed and the other is large,
hence \cite{Long-Li-Liu-Zhao} studies how the PI solution $y(t;a,b)$ evolves when the initial data $(a,b)$ vary along a horizontal or a vertical line on the $(a,b)$-plane.
While in the present paper, we assume that $|\frac{b^2}{4}-a^3|$ is large and study how the solution evolves when the initial data $(a,b)$ vary along any directions.
Therefore, the results in the present paper include the ones in \cite{Long-Li-Liu-Zhao}
as special cases, as well as provide new results.

A similar study to the second Painlev\'{e} equation is carried out in \cite{Long-Zeng}. They give an asymptotic classification of the PII solutions
with large initial data $(a,b)$ by assuming that $b^2-a^4\to\pm\infty$.
However, they only studied the classification problem of the PII solutions when $a^4|b^2-a^4|^{-1}$ is bounded,
which is similar to the first case $a^{3}|\frac{b^2}{4}-a^3|^{-1}\leq M_{1}$ in the present paper.
We believe there is also another interesting case for PII to be analyzed
and we expect a result of PII similar to that of PI.
After Bender--Komijani's work on the first and second Painlev\'e transcendents~\cite{Bender-Komijani-2015},
they obtained a similar result to the fourth Painlev\'e equation PVI~\cite{Bender-Komijani-2022}.
It is also interesting to apply the same analysis in the present paper
to the initial value problems in the more general setting for PVI,
namely, one may consider the case when both $a$ and $b$ are large instead of one of them being large.

An interesting problem was brought out by Bender et al.~\cite{Bender-Komijani-Wang-2019}
for the generalized Painlev\'e equations, where they found the nonlinear eigenvalues
giving rises to separatrix solutions.
Since our method starts with the Lax pair or the Schrodinger equation,
which is not known to the generalized Painlev\'e equations yet;
thus this method can not apply to generalized Painlev\'e equations at the moment.

The classification of the PI solutions obtained in the present paper
is an asymptotic one, namely, we assume the initial data $(a,b)$ is large,
and we only obtain the limiting form equation of the curves $\Gamma_{n}$ as $n\to+\infty$.
A full solution to the problem of finding the exact equation of $\Gamma_{n}$ for finite $n$ is still an open problem and deserves further studies.

\section*{Acknowledgement}
All authors are grateful to the Reviewers and the Editor for their invaluable comments and suggestions, which improve the readability of the current paper significantly.
This work was supported by
the National Natural Science Foundation of China [Grant nos. 12071394],
the Natural Science Foundation of Hunan Province [Grant no. 2020JJ5152],
the General Project of Hunan Provincial Department of Education [Grant no. 19C0771],
and the Doctoral Startup Fund of Hunan University of Science and Technology [Grant no. E51871].

\end{document}